\documentclass[a4paper,UKenglish,cleveref, autoref, thm-restate]{lipics-v2021}

\hideLIPIcs



\bibliographystyle{plainurl}

\title{An Efficient Algorithm for All-Pairs Bounded Edge Connectivity}

\theoremstyle{definition}
\newtheorem{problem}[theorem]{Open Problem}
\newtheorem*{problem*}{Open Problem}

\newtheorem*{conj*}{Conjecture}

\DeclareMathOperator{\adj}{adj}
\renewcommand{\t}{\text}

\newcommand{\indeg}{\deg_{\text{in}}}
\newcommand{\outdeg}{\deg_{\text{out}}}
\newcommand{\grp}[1]{\left(#1\right)}
\newcommand{\set}[1]{\left\{#1\right\}}

\DeclareMathOperator{\rank}{rank}
\DeclareMathOperator{\poly}{poly}
\newcommand{\outedges}[1]{E_{\textup{out}}(#1)}
\newcommand{\inedges}[1]{E_{\textup{in}}(#1)}
\newcommand{\outnodes}[1]{V_{\textup{out}}(#1)}
\newcommand{\outnodescl}[1]{V_{\textup{out}}[#1]}
\newcommand{\innodes}[1]{V_{\textup{in}}(#1)}
\newcommand{\innodescl}[1]{V_{\textup{in}}[#1]}
\newcommand{\nnew}{n_{\textup{new}}}
\newcommand{\mnew}{m_{\textup{new}}}
\newcommand{\sub}{\subseteq}


\usepackage[ruled]{algorithm}
\usepackage[noend]{algpseudocode}

\author{Shyan Akmal}{MIT EECS and CSAIL, Cambridge MA, USA \and \url{https://www.shyanakmal.com}}{naysh@mit.edu}{https://orcid.org/0000-0002-7266-2041}{Supported in part by NSF grants CCF-2129139 and CCF-2127597.}

\author{Ce Jin}{MIT EECS and CSAIL, Cambridge MA, USA \and \url{https://ce-jin.github.io/}}{cejin@mit.edu}{}{Partially supported by NSF Grant CCF-2129139 and a Siebel Scholarship.}

\authorrunning{S. Akmal and C. Jin} 

\Copyright{Shyan Akmal and Ce Jin} 

\begin{CCSXML}
<ccs2012>
   <concept>
       <concept_id>10002950.10003624.10003633.10010917</concept_id>
       <concept_desc>Mathematics of computing~Graph algorithms</concept_desc>
       <concept_significance>500</concept_significance>
       </concept>
 </ccs2012>
\end{CCSXML}

\ccsdesc[500]{Mathematics of computing~Graph algorithms}

\keywords{maximum flow, all-pairs, connectivity, matrix rank} 

\category{} 

\relatedversion{}

\acknowledgements{The first author thanks Virginia Vassilevska Williams for insightful discussions on algorithms for computing matrix rank. }

\nolinenumbers 


\begin{document}

\maketitle

\begin{abstract}
Our work concerns algorithms for a variant of \textsf{Maximum Flow} in unweighted graphs.
In the \textsf{All-Pairs Connectivity (APC)} problem, we are given a graph $G$ on $n$ vertices and $m$ edges, and are tasked with computing the maximum number of edge-disjoint paths from $s$ to $t$ (equivalently, the size of a minimum $(s,t)$-cut) in $G$, for all pairs of vertices $(s,t)$.  
Significant algorithmic breakthroughs have recently shown that over undirected graphs, \textsf{APC} can be solved in $n^{2+o(1)}$ time, which is essentially optimal.
In contrast, the true time complexity of \textsf{APC} over directed graphs remains open: this problem can be solved in $\tilde{O}(m^\omega)$ time, where $\omega \in [2, 2.373)$ is the exponent of matrix multiplication, but no matching conditional lower bound is known.

Following 
[Abboud et al., ICALP 2019], 
we study a bounded version of $\textsf{APC}$ called the \textsf{$k$-Bounded All Pairs Connectivity ($k$-APC)} problem.
In this variant of \textsf{APC}, we are given an integer $k$ in addition to the graph $G$, and are now tasked with reporting the size of a minimum $(s,t)$-cut only for pairs $(s,t)$ of vertices with min-cut value less than $k$ (if the minimum $(s,t)$-cut has size at least $k$, we can just report it is ``large'' instead of computing the exact value).

Our main result is an $\tilde{O}((kn)^\omega)$ time algorithm solving \textsf{$k$-APC} in directed graphs.
This is the first algorithm which solves \textsf{$k$-APC} faster than simply solving the more general \textsf{APC} problem exactly, for all $k\ge 3$.
This runtime is $\tilde O(n^\omega)$ for all $k\le \poly(\log n)$, which essentially matches the optimal runtime for the $k=1$ case of \textsf{$k$-APC}, under popular conjectures from fine-grained complexity.
Previously, this runtime was only achieved for $k\le 2$ in general directed graphs
[Georgiadis et al., ICALP 2017], 
and for $k\le o(\sqrt{\log n})$ in the special case of directed acyclic graphs 
[Abboud et al., ICALP 2019]. 
Our result employs the same algebraic framework used in previous work, introduced by 
[Cheung, Lau, and Leung, FOCS 2011].
A direct implementation of this framework involves inverting a large random matrix. 
Our new algorithm is based off the insight that for solving \textsf{$k$-APC}, it suffices to invert a low-rank random matrix instead of a generic random matrix.

We also obtain a new algorithm for a variant of \textsf{$k$-APC}, the \textsf{$k$-Bounded All-Pairs Vertex Connectivity ($k$-APVC)} problem, where we are now tasked with reporting, for every pair of vertices $(s,t)$, the maximum number of internally vertex-disjoint (rather than edge-disjoint) paths from $s$ to $t$ if this number is less than $k$, and otherwise reporting that there are at least $k$ internally vertex-disjoint paths from $s$ to $t$.

Our second result is an $\tilde{O}(k^2n^\omega)$ time algorithm solving \textsf{$k$-APVC} in directed graphs.
Previous work showed how to solve an easier version of the \textsf{$k$-APVC} problem (where answers only need to be returned for pairs of vertices $(s,t)$ which are not edges in the graph) in $\tilde O((kn)^\omega)$ time [Abboud et al, ICALP 2019].
In comparison, our algorithm solves the full \textsf{$k$-APVC} problem, and is faster if $\omega > 2$.
\end{abstract}

\section{Introduction}
\label{sec:intro}

Computing maximum flows is a classic problem which has been extensively studied in graph theory and computer science.
In unweighted graphs, this task specializes to computing connectivities, an interesting computational problem in its own right.
Given a graph $G$ on $n$ vertices and $m$ edges, for any vertices $s$ and $t$ in $G$, the \emph{connectivity} $\lambda(s,t)$ from $s$ to $t$ is defined to be the maximum number of edge-disjoint paths\footnote{By Menger's theorem, $\lambda(s,t)$ is also equal to the minimum number of edges that must be deleted from the graph $G$ to produce a graph with no $s$ to $t$ path.} from $s$ to $t$. 
Since maximum flow can be computed in almost-linear time, we can compute $\lambda(s,t)$ for any given vertices $s$ and $t$ in $m^{1+o(1)}$ time \cite{near-linear-flow}.

What if instead of merely returning the value of a single connectivity, our goal is to compute all connectivities in the graph? 
This brings us to the \textsf{All-Pairs Connectivity (APC)} problem: in this problem, we are given a graph $G$ as above, and are tasked with computing $\lambda(s,t)$ for all pairs of vertices $(s,t)$ in $G$. 
In undirected graphs, \textsf{APC} can be solved in $n^{2+o(1)}$ time \cite{apmf-apsp}, so that this ``all-pairs'' problem is essentially no harder than outputting a single connectivity in dense graphs.

In directed graphs, \textsf{APC} appears to be much harder, with various conditional lower bounds (discussed in \Cref{subsec:related-work}) suggesting it is unlikely this problem can be solved in quadratic time.
Naively computing the connectivity separately for each pair yields an $n^2m^{1+o(1)}$ time algorithm for this problem.
Using the flow vector framework (discussed in \Cref{sec:technical-overview}), it is possible to solve \textsf{APC} in directed graphs in $\tilde{O}(m^\omega)$ time\footnote{Given a function $f$, we write $\tilde{O}(f)$ to denote $f\cdot\poly(\log f).$} \cite{apc}, where $\omega$ is the exponent of matrix multiplication.
Known algorithms imply that $\omega < 2.37286$ \cite{refined-laser-mmult}, so the $\tilde{O}(m^{\omega})$ time algorithm is faster than the naive algorithm whenever the input graph is not too dense. 

Our work focuses on a bounded version of the \textsf{APC} problem, which we formally state as the \textsf{$k$-Bounded All-Pairs Connectivity ($k$-APC)} problem: in this problem, we are given a \emph{directed} graph $G$ as above, and are tasked with computing $\min(k, \lambda(s,t))$ for all pairs  of vertices $(s,t)$ in $G$.
Intuitively, this is a relaxation of the \textsf{APC} problem, where our goal is to compute the exact values of $\lambda(s,t)$ only for pairs $(s,t)$ with small connectivity.
For all other pairs, it suffices to report that the connectivity is large, where $k$ is our threshold for distinguishing between small and large connectivity values. 

When $k=1$, the \textsf{$k$-APC} problem is equivalent to computing the transitive closure of the input graph 
(in this problem, for each pair of vertices $(s,t)$, we are tasked with determining if $G$ contains a path from $s$ to $t$), which can be done in $\tilde{O}(n^{\omega})$ time \cite{transitive-closure-BMM}.
Similarly, for the special case of $k=2$, it is known that \textsf{$k$-APC} can be solved in $\tilde{O}(n^{\omega})$ time, by a divide-and-conquer algorithm employing a cleverly tailored matrix product \cite{2-APC}.
As we discuss in \Cref{subsec:related-work}, there is evidence that these runtimes for \textsf{$k$-APC} when $k\le 2$ are essentially optimal.

Already for $k=3$ however, it is open whether \textsf{$k$-APC} can be solved faster than computing the \emph{exact values} of $\lambda(s,t)$  for all pairs $(s,t)$ of vertices!
Roughly speaking, this is because the known $\tilde{O}(m^\omega)$ time algorithm for \textsf{APC} involves encoding the connectivity information in the inverse of an $m\times m$ matrix, and inverting an $m\times m$ matrix takes $O(m^\omega)$ time in general.
This encoding step appears to be necessary for \textsf{$k$-APC} as well.
For $k = 2$, clever combinatorial observations about the structure of strongly connected graphs allow one to skip this computation, but for $k\ge 3$ it is not clear at all from previous work how to avoid this bottleneck.
Moreover, it is consistent with existing hardness results that \textsf{$k$-APC} could be solved in $O(n^\omega)$ time for any constant $k$. 

\begin{problem}
    \label{prob:1}
    Can \textsf{$k$-APC} be solved in faster than $\tilde{O}(m^{\omega})$ time for $k=3$?
\end{problem}

Due to this lack of knowledge about the complexity of \textsf{$k$-APC}, researchers have also studied easier versions of this problem.
Given vertices $s$ and $t$ in the graph $G$, we  define the \emph{vertex connectivity} $\nu(s,t)$ 
from $s$ to $t$ to be the maximum number of internally vertex-disjoint paths from $s$ to $t$.
We can consider vertex connectivity analogues of the \textsf{APC} and \textsf{$k$-APC} problems.
In the \textsf{All-Pairs Vertex Connectivity (APVC)} problem,  we are given a graph $G$ on $n$ vertices and $m$ edges, and are tasked with computing the value of $\nu(s,t)$ for all pairs of vertices $(s,t)$ in $G$.
In the \textsf{$k$-Bounded All-Pairs Vertex Connectivity ($k$-APVC)} problem, we are given the same input $G$ as  above, but are now tasked with only computing $\min(k,\nu(s,t))$ for all pairs of vertices $(s,t)$ in $G$.

The \textsf{$k$-APVC} problem does not face the $O(m^\omega)$ barrier which existing algorithmic techniques for \textsf{$k$-APC} seem to encounter, intuitively because it is possible to encode all the vertex-connectivity information of a graph in the inverse of an $n\times n$ matrix instead of an $m\times m$ matrix.
As a consequence, \cite{ap-bounded-mincut} was able to present an $\tilde{O}((kn)^\omega)$ time algorithm which computes $\min(k,\nu(s,t))$ for all pairs of vertices $(s,t)$ such that $(s,t)$ is not an edge.
Given this result, it is natural to ask whether the more general \textsf{$k$-APVC} and \textsf{$k$-APC} problems can also be solved in this same running time.

\begin{problem}
    \label{prob:3}
    Can \textsf{$k$-APVC} be solved in $\tilde{O}((kn)^{\omega})$ time?
\end{problem}

\begin{problem}
    \label{prob:2}
    Can \textsf{$k$-APC} be solved in $\tilde{O}((kn)^{\omega})$ time?
\end{problem}

\subsection{Our Contribution}

We resolve all three open problems raised in the previous section. 

First, we present a faster algorithm for \textsf{$k$-APC}, whose time complexity matches the runtime given by previous work for solving an easier version of \textsf{$k$-APVC}.

\begin{restatable}{theorem}{flowthm}
    \label{thm:flow}
    For any positive integer $k$, \textsf{$k$-APC} can be solved in $\tilde{O}((kn)^\omega)$ time.
\end{restatable}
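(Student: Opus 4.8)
The plan is to build on the algebraic ``flow-vector'' framework that underlies the $\tilde O(m^\omega)$-time algorithm for \textsf{APC}. Recall its shape: one picks a random $m\times m$ matrix $A$ supported on the line graph of $G$ --- so $A[f,e]$ is an independent uniformly random field element when the head of $f$ equals the tail of $e$, and $0$ otherwise --- and shows that over a field of size $\poly(n)$ and with high probability, for every pair $(s,t)$ the value $\lambda(s,t)$ equals the rank of the submatrix of $(I-A)^{-1}$ with rows indexed by $\outedges{s}$ and columns by $\inedges{t}$. (Entry $[(I-A)^{-1}]_{f,e}$ is a generating function for walks in $G$ that start with edge $f$ and end with edge $e$, and the rank of this block of walk-generating functions is the maximum number of edge-disjoint paths from $s$ to $t$.) The one expensive step is inverting the $m\times m$ matrix $I-A$.

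The key idea is that for \textsf{$k$-APC} these ranks matter only up to the threshold $k$, so $A$ may be replaced by a random matrix of rank only $O(kn)$. Give each vertex $v$ a block $B_v$ of $k$ fresh ``hub'' coordinates ($kn$ in all) and set $A':=PQ$, where $Q$ is the $kn\times m$ matrix whose column for an edge $e$ carries a random vector supported on $B_{\text{tail}(e)}$, and $P$ is the $m\times kn$ matrix whose row for an edge $f$ carries a random vector supported on $B_{\text{head}(f)}$. Then $A'$ is still supported on the line graph of $G$, but the ``mixing'' each vertex $v$ performs between its incoming and outgoing edges is now routed through the $k$-dimensional hub $B_v$, and $A'$ has rank at most $kn$. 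Two facts should hold, with high probability and simultaneously over all $(s,t)$: (i) since every walk from $s$ to $t$ crosses a minimum $(s,t)$-cut (of only $\lambda(s,t)$ edges), the $\outedges{s}$-by-$\inedges{t}$ block of $(I-A')^{-1}$ has rank at most $\lambda(s,t)$, and compressing this block to a $k\times k$ matrix by random linear maps caps the value at $\min(k,\lambda(s,t))$; (ii) conversely the random hub maps faithfully transmit flow below the threshold, so that value is at least $\min(k,\lambda(s,t))$ --- hence equal to it.

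The payoff is that $(I-A')^{-1}$ is never formed explicitly. By the Woodbury identity $(I-PQ)^{-1}=I+P(I-QP)^{-1}Q$, it suffices to invert the $kn\times kn$ matrix $I-QP$; and since $P$ and $Q$ each have only $O(km)$ nonzero entries, $QP$ is assembled in $O(mk^2)$ time --- its $(u,v)$ block is a sum of one rank-one $k\times k$ matrix per edge from $u$ to $v$ --- so computing $(I-QP)^{-1}$ costs $\tilde O((kn)^\omega)$ (the matrix $I-QP$ being invertible with high probability). To recover every answer within the same budget: for each vertex $v$ draw a fresh random $k\times\outdeg(v)$ matrix $L_v$ and a fresh $\indeg(v)\times k$ matrix $N_v$; form the $kn\times kn$ matrices $\widehat P$, stacking the blocks $L_sP[\outedges{s},:]$ over all $s$, and $\widehat Q$, stacking $Q[:,\inedges{t}]N_t$ over all $t$, each in $O(mk^2)$ time by sparsity; compute $\widehat P(I-QP)^{-1}\widehat Q$ in $\tilde O((kn)^\omega)$ time; and then, for each pair $(s,t)$, add to its $k\times k$ block the correction $L_s J_{st} N_t$, where $J_{st}$ is the sparse matrix recording the edges directly from $s$ to $t$ (total cost $\sum_{s,t}O(\mu(s,t)k^2)=O(mk^2)$, writing $\mu(s,t)$ for the number of such edges), and report $\min\!\big(k,\,\rank(\text{that }k\times k\text{ matrix})\big)$. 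Every step runs in $\tilde O((kn)^\omega)$ time.

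The crux, and the main obstacle, is part (ii): that the $k\times k$ matrix produced for $(s,t)$ has rank exactly $\min(k,\lambda(s,t))$ with high probability, for all $n^2$ pairs at once under a single draw of all the random vectors. The ``$\le$'' direction is the easy one, via the min-cut and projection arguments of (i). For ``$\ge$'', one fixes $r:=\min(k,\lambda(s,t))$ edge-disjoint paths from $s$ to $t$ (including all of the at most $k-1$ direct ones), dedicates hub coordinate $i$ of each visited vertex to the $i$-th path, and checks that under this specialization of the random vectors a suitable $r\times r$ minor of the matrix is a nonzero polynomial --- essentially the classical random network-coding rank argument, now with a $k$-dimensional coding space. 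Since that minor is a polynomial of degree $\poly(n)$ in the entries, Schwartz--Zippel over a field of size $\poly(n)$, followed by a union bound over the $n^2$ pairs, completes the proof. A little extra care is needed to handle walks that ``loop through'' an endpoint $s$ or $t$ --- exactly as in the unbounded framework --- and to confirm that the direct-edge correction and the case of many parallel edges behave as claimed; this is also why the final step takes the minimum with $k$ rather than trusting the raw rank of the block.
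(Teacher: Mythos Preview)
Your proposal is correct and shares the paper's central idea: replace the generic random $m\times m$ edge-transfer matrix by a rank-$kn$ matrix $A'=PQ$ (the paper's $K=LR$), apply the Woodbury identity $(I-PQ)^{-1}=I+P(I-QP)^{-1}Q$, and invert only the $kn\times kn$ matrix $I-QP$. The correctness argument you sketch for the rank bounds (upper bound via a min-cut, lower bound via a path-dedicated specialization and Schwartz--Zippel) is exactly the content of the paper's Lemmas~\ref{lem:conn-upper-bound}--\ref{lem:conn-equality}.

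The one genuine difference is in how you shrink each $\outedges{s}\times\inedges{t}$ block down to a $k\times k$ matrix. The paper first performs a graph transformation (splitting each vertex $v$ into $v_{\text{in}}\to v\to v_{\text{out}}$ with $k$ parallel edges on each side), which forces every original vertex to have in- and out-degree exactly $k$ and, crucially, makes $\outedges{s}\cap\inedges{t}=\emptyset$ for all original pairs $(s,t)$; this kills the identity term in Woodbury outright and makes the relevant submatrices \emph{literally} $k\times k$. You instead keep the original graph and post-multiply by fresh random $k\times\outdeg(s)$ and $\indeg(t)\times k$ projections $L_s,N_t$, then add back the direct-edge correction $L_sJ_{st}N_t$. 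Both routes work. The paper's transformation buys a cleaner analysis (no correction term, no separate projection lemma), at the price of tripling $n$; your projection approach is precisely the trick the paper itself uses for \textsf{$k$-APVC} (its Lemma~\ref{lem:conditioning}), and it transplants cleanly here once one observes that $L_s,N_t$ are independent of $P,Q$. One small slip: you write ``at most $k-1$ direct ones,'' but under the paper's assumption there can be up to $k$ parallel $(s,t)$ edges; the two-step argument (first bound $\rank M_{st}$, then apply the projection lemma) handles this without needing to treat direct edges specially in the specialization.
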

\noindent This is the first algorithm which solves \textsf{$k$-APC} faster than simply solving \textsf{APC} exactly using the $\tilde O(m^\omega)$ time algorithm of \cite{apc}, for all constant $k\ge 3$.

Second, we present an algorithm for \textsf{$k$-APVC}, which is faster than the $\tilde O((kn)^\omega)$ time algorithm from \cite{ap-bounded-mincut} (which only solves a restricted version of \textsf{$k$-APVC}) if $\omega > 2$.

\begin{restatable}{theorem}{vertexflowthm}
    \label{thm:vertex-flow}
    For any positive integer $k$, \textsf{$k$-APVC} can be solved in $\tilde{O}(k^2n^\omega)$ time. 
\end{restatable}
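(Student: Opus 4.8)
The plan is to reduce \textsf{$k$-APVC} to computing the ranks of $n^2$ small matrices, each of which can be read off from a constant number of $n\times n$ matrix products built from the inverse of a single random matrix. Fix a prime $p=\poly(n)$ and work over $\mathbb{F}_p$. Let $A$ be the random $n\times n$ matrix whose $(u,v)$ entry is a uniformly random element of $\mathbb{F}_p$ if $(u,v)\in E$ and $0$ otherwise, and set $K:=(I-A)^{-1}$ (invertible with high probability). The starting point is the algebraic form of Menger's theorem underlying the Cheung--Lau--Leung framework: writing $N^+(s)$ and $N^-(t)$ for the out- and in-neighborhoods and $K[\cdot,\cdot]$ for the indicated submatrix, one has $\rank\bigl(K[N^+(s),N^-(t)]\bigr)=\nu(s,t)$ with high probability for every pair with $(s,t)\notin E$. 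I would derive this from the vertex-splitting reduction from vertex connectivity to edge connectivity together with a Lindström--Gessel--Viennot-style argument: a nonzero $r\times r$ minor of $K[N^+(s),N^-(t)]$ certifies $r$ internally vertex-disjoint $s$-$t$ paths and conversely; the converse (generic non-cancellation) for digraphs with cycles is the delicate point, which I would take from the framework rather than reprove.

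Next I would introduce the compression. Let $P,Q$ be independent uniformly random $n\times k$ matrices over $\mathbb{F}_p$, and for vertices $s,t$ let $D_s,D'_t$ be the $n\times n$ diagonal $0/1$ matrices marking $N^+(s)$ and $N^-(t)$. A Schwartz--Zippel argument then gives, with high probability simultaneously over all pairs, $\rank\bigl(P^\top D_s K D'_t Q\bigr)=\min\bigl(k,\rank(D_s K D'_t)\bigr)=\min\bigl(k,\nu(s,t)\bigr)$ for every non-edge pair $(s,t)$. The key observation is that the $(a,b)$ entry of the $k\times k$ matrix $P^\top D_s K D'_t Q$ equals the $(s,t)$ entry of the $n\times n$ matrix $E^{(a,b)}:=B\,\mathrm{diag}(P_{\cdot,a})\,K\,\mathrm{diag}(Q_{\cdot,b})\,B$, where $B$ is the $0/1$ adjacency matrix of $G$. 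So the algorithm is: compute $K$ in $\tilde O(n^\omega)$ time; for each $(a,b)\in[k]\times[k]$ compute $E^{(a,b)}$ with two $n\times n$ matrix multiplications, for $\tilde O(k^2 n^\omega)$ total; then for each non-edge pair $(s,t)$ assemble $\bigl(E^{(a,b)}_{s,t}\bigr)_{a,b\in[k]}$ in $O(k^2)$ time and output its rank in $O(k^\omega)$ time. Since we may assume $k\le n$, the $\tilde O(n^2 k^\omega)$ spent on the final rank computations is $\tilde O(k^2 n^\omega)$, so the whole procedure runs in $\tilde O(k^2 n^\omega)$ time. This two-sided compression to $k$ dimensions is exactly what avoids the $(kn)\times(kn)$ inversion, and the $(kn)^\omega$ runtime, incurred by the one-sided blow-up of \cite{ap-bounded-mincut}.

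It remains to handle pairs with $(s,t)\in E$, the case not covered by \cite{ap-bounded-mincut}; here the naive identity fails (already for $G$ consisting of the edge $s\to t$ together with a path $s\to a\to t$, the submatrix $K[N^+(s),N^-(t)]$ has rank $1$ while $\nu(s,t)=2$). I would use $\nu_G(s,t)=1+\nu_{G-(s,t)}(s,t)$, valid because the single edge $s\to t$ is an internally vertex-disjoint path from every other $s$-$t$ path. Deleting $(s,t)$ changes $A$ by the rank-one matrix $A_{s,t}\,e_s e_t^\top$, so Sherman--Morrison gives $(I-A_{G-(s,t)})^{-1}=K-\gamma_{s,t}\,K_{\cdot,s}\,K_{t,\cdot}$ for an explicit scalar $\gamma_{s,t}$. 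Hence the compressed submatrix needed for the pair $(s,t)$ is obtained from the already-computed $\bigl(E^{(a,b)}_{s,t}\bigr)_{a,b}$ by deleting the contributions of row $t$ and column $s$ and subtracting a rank-one correction; all the extra ingredients --- $O(k)$ matrices of the form $B\,\mathrm{diag}(P_{\cdot,a})\,K$ and $K\,\mathrm{diag}(Q_{\cdot,b})\,B$ (whose diagonal entries supply the needed row/column sums), plus individual entries of $K$, $P$, $Q$, $A$ --- can be precomputed within the same $\tilde O(k^2 n^\omega)$ budget, after which each of the $\le n^2$ edge pairs is resolved by an $O(k^\omega)$-time rank computation on an $O(k)\times O(k)$ matrix.

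I expect the main obstacle to be the rank--connectivity identity itself --- establishing $\rank(K[N^+(s),N^-(t)])=\nu(s,t)$ with high probability over $\mathbb{F}_p$ for general directed graphs rather than just DAGs (where Gessel--Viennot applies directly), and making the edge case $(s,t)\in E$ and its Sherman--Morrison correction go through cleanly for all $m$ edges within the time bound. By contrast, once the identity and the compression lemma are in hand, the remaining work --- bounding the cost of the matrix products and of the $n^2$ small rank computations, and a Schwartz--Zippel union bound over all pairs with a field of size $\poly(n)$ --- is routine.
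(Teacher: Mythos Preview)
Your proposal is correct and follows the same overall architecture as the paper: compute $(I-A)^{-1}$ once, use two-sided random projection to $O(k)$ dimensions, and recover all $n^2$ compressed matrices simultaneously via $O(k^2)$ products of $n\times n$ matrices of the form $B\,\mathrm{diag}(\cdot)\,(I-A)^{-1}\,\mathrm{diag}(\cdot)\,B$, then read off ranks. The time analysis and the compression lemma (your Schwartz--Zippel step) match the paper's \Cref{lem:conditioning} and the product construction in step~\ref{node-step:encode-multiply} of \Cref{alg:k-APVC}.

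The genuine difference is in how edge pairs $(s,t)\in E$ are handled. The paper avoids your Sherman--Morrison detour entirely by proving a slightly different rank identity (\Cref{node-prop:original-vc}): working with the \emph{closed} neighborhoods $\outnodescl{s}=N^+(s)\cup\{s\}$ and $\innodescl{t}=N^-(t)\cup\{t\}$, one has $\rank\,(I-K)^{-1}[\outnodescl{s},\innodescl{t}]=\nu(s,t)+\mathbf{1}[(s,t)\in E]$ with high probability, uniformly over all pairs. Correspondingly, the paper compresses to $(k{+}1)\times(k{+}1)$ rather than $k\times k$, and uses the adjacency matrix \emph{with self-loops} in place of your $B$; then for edge pairs one simply outputs $(\rank F_{s,t})-1$. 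This buys a single clean formula and eliminates the per-edge rank-one corrections, the need to track the $2k$ half-product diagonals, and the invertibility check for the Sherman--Morrison denominator across all $m$ edges. Your route also works and stays within the budget, but the closed-neighborhood identity is the simpler device; it is worth noting that this identity is exactly the new technical ingredient the paper isolates (and proves in the appendix), since, as your own example shows, the open-neighborhood version from \cite{ap-bounded-mincut} fails on edge pairs.
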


\subsection{Comparison to Previous Results}
\label{subsec:related-work}

\subsubsection*{Conditional Lower Bounds}
The field of fine-grained complexity contains many popular conjectures (which hypothesize lower bounds on the complexity of certain computational tasks) which are used as the basis of conditional hardness results for problems in computer science. 
In this section, we review known hardness results for \textsf{APC} and its variants.
The definitions of the problems and conjectures used in this section are stated in 
\Cref{sec:fg-conjectures}.

Assuming that \textsf{Boolean Matrix Multiplication (BMM)} requires $n^{\omega - o(1)}$ time, it is known that \textsf{$k$-APC} and \textsf{$k$-APVC} require $n^{\omega - o(1)}$ time to solve, even for $k=1$ \cite{transitive-closure-BMM}.
In particular, this hypothesis implies our algorithms for \textsf{$k$-APC} and \textsf{$k$-APVC} are optimal for constant $k$. 

Assuming the \textsf{Strong Exponential Time Hypothesis (SETH)}, previous work shows that \textsf{APC} requires $(mn)^{1 - o(1)}$ time \cite[Theorem 1.8]{seth-flow}, \textsf{APVC} requires $m^{3/2 - o(1)}$ time \cite[Theorem 1.7]{Trabelsi23}, and 
\textsf{$k$-APC} requires $\grp{kn^2}^{1-o(1)}$ time \cite[Theorem 4.3]{seth-flow}.

Let $\omega(1,2,1)$ be the smallest real number\footnote{Known fast matrix multiplication algorithms imply that $\omega(1,2,1) < 3.25669$ \cite[Table 2]{rectangular-mmult}.} such that we can compute the product of an $n\times n^2$ matrix and $n^2\times n$ matrix in $n^{\omega(1,2,1) + o(1)}$ time.
Assuming the \textsf{$4$-Clique Conjecture}, the \textsf{$k$-APVC} problem over directed graphs (and thus the \textsf{$k$-APC} problem as well) requires $\grp{k^2n^{\omega(1,2,1)-2}}^{1-o(1)}$ time \cite{ap-bounded-mincut}.
The \textsf{4-Clique Conjecture} also implies that solving \textsf{APVC} in undirected graphs requires $n^{\omega(1,2,1) - o(1)}$ time 
 \cite{undirected-APVC}.

\subsubsection*{Algorithms for Restricted Graph Classes}
As mentioned previously, no nontrivial algorithms for \textsf{$k$-APC} over general directed graphs were known for $k\ge 3$, prior to our work.
However, faster algorithms were already known for \textsf{$k$-APC} over directed acyclic graphs (DAGs).
In particular, \cite{ap-bounded-mincut} presented two algorithms to solve \textsf{$k$-APC} in DAGs, running in $2^{O(k^2)}mn$ time and $(k\log n)^{4^k + o(k)}n^\omega$ time respectively.

 In comparison, our algorithm from \Cref{thm:flow} solves \textsf{$k$-APC} in \emph{general} directed graphs, is faster than the former algorithm whenever $m\ge n^{\omega - 1}$ or $k\ge \omega(\sqrt{\log n})$ (for example), is always faster than the latter algorithm, and is significantly simpler from a technical perspective than these earlier arguments.
However, these algorithms for \textsf{$k$-APC} on DAGs also return cuts witnessing the connectivity values, while our algorithm does not.

In the special case of undirected graphs, \textsf{APVC} can be solved in $m^{2+o(1)}$ time \cite[Theorem 1.8]{Trabelsi23}, which is faster than the aforementioned $\tilde{O}(m^\omega)$ time algorithm if $\omega > 2$.
Over undirected graphs, \textsf{$k$-APVC} can be solved in $k^3m^{1+o(1)} + n^2\poly(\log n)$ time.
In comparison, our algorithm from \Cref{thm:vertex-flow} can handle \textsf{$k$-APVC} in both undirected \emph{and} directed graphs, and is faster for large enough values of $k$ in dense graphs.

In directed planar graphs with maximum degree $d$, \cite[Theorem 1.5]{apc} proves that \textsf{APC} can be solved in $O\grp{d^{\omega-2}n^{\omega/2+1}}$ time.

\subsubsection*{Additional Related Work}
In \cite{ap-strong-connectivity}, the authors consider a symmetric variant of \textsf{$k$-APC}.
Here, the input is a directed graph $G$ on $n$ vertices and $m$ edges, and the goal is to compute for all pairs of vertices $(s,t)$, the value of $\min(k,\lambda(s,t),\lambda(t,s))$.
This easier problem can be solved in $O(kmn)$ time \cite{ap-strong-connectivity}.

\subsection{Organization}

The rest of this paper is devoted to proving \Cref{thm:flow,thm:vertex-flow}.
In \Cref{sec:prelim} we introduce notation, some useful definitions, and results on matrix computation which will be useful in proving correctness of our algorithms.
In \Cref{sec:technical-overview} we provide an intuitive overview of our algorithms for \textsf{$k$-APC} and \textsf{$k$-APVC}.
In \Cref{sec:flow} we describe a framework of ``flow vectors'' for capturing connectivity values, and in \Cref{sec:k-APC} use this framework to prove \Cref{thm:flow}.
In \Cref{sec:flow-vertex} we present helpful results about vertex-connectivity, and in \Cref{sec:k-APVC} use these results to prove \Cref{thm:vertex-flow}.
We conclude in \Cref{sec:conclusion}, highlighting some interesting open problems suggested by our work.

In \Cref{sec:fg-conjectures}, we include definitions of problems and conjectures mentioned in \Cref{subsec:related-work}.
In \Cref{sec:app-proof}, we discuss how the treatment of \textsf{$k$-APVC} in \cite{ap-bounded-mincut} differs from our own, and present the proof details for one of the results stated in \Cref{sec:flow-vertex}.

\section{Preliminaries}
\label{sec:prelim}

\paragraph*{Graph Assumptions}
Throughout, we let $G$ denote a directed graph on $n$ vertices and $m$ edges.
Without loss of generality, we assume that the underlying undirected graph of $G$ is connected,  i.e., $G$ is weakly connected (since, if not, we could simply run our algorithms separately on each weakly connected component of $G$), so we have $m\ge n-1$.
We  assume $G$ has no self-loops, since these do not affect the connectivity or vertex-connectivity values between distinct vertices.

In \Cref{sec:flow,sec:k-APC} we focus on the \textsf{$k$-APC} problem, and so allow $G$ to have parallel edges between vertices (i.e., $G$ can be a multigraph).
We assume however, without loss of generality, that for any distinct vertices $s$ and $t$, there are at most $k$ edges from $s$ to $t$ (since if there were more than $k$ parallel edges from $s$ to $t$, we could delete some and bring the count of parallel edges down to $k$ without changing the value of $\min(k,\lambda(s,t))$).
In \Cref{sec:flow-vertex,sec:k-APVC} we focus on the \textsf{$k$-APVC} problem, and so assume that $G$ is a simple graph with no parallel edges, since parallel edges from $u$ to $v$ cannot affect the value of a vertex connectivity $\nu(s,t)$, unless $u=s$ and $v=t$, in which case the value of $\nu(s,t)$ is simply increased by the number of additional parallel edges from $s$ to $t$. 

\paragraph*{Graph Terminology and Notation}
Given an edge $e$ from $u$ to $v$ in $G$, we write $e = (u,v)$.
We call $u$ the tail of $e$ and $v$  the head of $e$.
Vertices which are tails of edges entering a vertex $v$ are called \emph{in-neighbors} of $v$.
Similarly, vertices which are heads of edges exiting $v$ are called \emph{out-neighbors} of $v$.
Given a vertex $u$ in $G$, we let $\inedges{u}$ denote the set of edges entering $u$, and $\outedges{u}$ denote the set of edges exiting $u$.
Similarly, $\innodes{u}$ denotes the set of in-neighbors of $u$, and $\outnodes{u}$ denotes the set of out-neighbors of $u$. 
Furthermore, we define $\innodescl{u} = \innodes{u}\cup\set{u}$ and $\outnodescl{u} = \outnodes{u}\cup \set{u}$.
Finally, let $\indeg(u) = |\inedges{u}|$ and $\outdeg(u) = |\outedges{u}|$ denote the indegree and outdegree of $u$ respectively.

Given vertices $s$ and $t$, an $(s,t)$-cut is a set $C$ of edges, such that deleting the edges in $C$ produces a graph with no $s$ to $t$ path.
By Menger's theorem, the size of a minimum $(s,t)$-cut is equal to the connectivity $\lambda(s,t)$ from $s$ to $t$. 
Similarly, an $(s,t)$-vertex cut is a set of $C'$ of vertices with $s,t\not\in C'$, such that deleting $C'$ produces a graph with no $s$ to $t$ path.
Clearly, a vertex cut exists if and only if $(s,t)$ is not an edge.
When $(s,t)$ is not an edge, Menger's theorem implies that the size of a minimum $(s,t)$-vertex cut is equal to the vertex connectivity $\nu(s,t)$ from $s$ to $t$.

\paragraph*{Matrix Notation}
Let $A$ be a matrix.
For indices $i$ and $j$, we let $A[i,j]$ denote the $(i,j)$ entry of $A$.
More generally, if $S$ is a set of row indices and $T$ a set of column indices, we let $A[S,T]$ denote the submatrix of $A$ restricted to rows from $S$ and columns from $T$.
Similarly, $A[S,\ast]$ denotes $A$ restricted to rows from $S$, and $A[\ast, T]$ denotes $A$ restricted to columns from $T$. 
We let $A^\top$ denote the transpose of $A$.
If $A$ is a square matrix, then we let $\adj(A)$ denote the adjugate of $A$.
If $A$ is invertible, we let $A^{-1}$ denote its inverse.
If a theorem, lemma, or proposition statement refers to $A^{-1}$, it is generally asserting that $A^{-1}$ exists (or if $A$ is a random matrix, asserting that $A^{-1}$ exists with some probability) as part of the statement.
We let $I$ denote the identity matrix (the dimensions of this matrix will always be clear from context).
Given a vector $\vec{v}$, for any index $i$ we let $\vec{v}[i]$ denote the $i^{\text{th}}$ entry in $\vec{v}$.
We let $\vec{0}$ denote the zero vector (the dimensions of this vector will always be clear from context).
Given a positive integer $k$, we let $[k] = \set{1, \dots, k}$ denote the set of the first $k$ positive integers.

\paragraph*{Matrix and Polynomial Computation}

Given a prime $p$, we let $\mathbb{F}_p$ denote the finite field on $p$ elements.
Arithmetic operations over elements of $\mathbb{F}_p$ can be performed in $\tilde{O}(\log p)$ time. 

We now recall some well-known results about computation with matrices and polynomials, which will be useful for our algorithms.

\begin{proposition}
\label{lem:inverse}
Let $A$ be an $a\times b$ matrix, and $B$ be a $b\times a$ matrix.
If $(I-BA)$ is invertible, then the matrix $(I-AB)$ is also invertible, with inverse
    \[(I-AB)^{-1} = I + A(I-BA)^{-1}B.\]
\end{proposition}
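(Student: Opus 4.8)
The plan is to verify the claimed formula directly: multiply $(I-AB)$ against the proposed inverse and check that the product equals the identity. Since $AB$ is an $a\times a$ matrix, $(I-AB)$ is square, so it suffices to exhibit a one-sided inverse — a square matrix possessing a right inverse is invertible, with that right inverse being its two-sided inverse. Hence I would only compute $(I-AB)\bigl(I + A(I-BA)^{-1}B\bigr)$ and confirm it equals $I$.

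Expanding this product gives $I - AB + A(I-BA)^{-1}B - ABA(I-BA)^{-1}B$. The key manipulation is to collect the two terms carrying the factor $(I-BA)^{-1}$: writing $ABA(I-BA)^{-1}B = A(BA)(I-BA)^{-1}B$, the difference $A(I-BA)^{-1}B - A(BA)(I-BA)^{-1}B$ equals $A\bigl(I-BA\bigr)(I-BA)^{-1}B = AB$. Substituting this back, the $AB$ terms cancel and the product collapses to $I$, as desired. (Equivalently, one could first establish the ``push-through'' identity $A(I-BA)^{-1} = (I-AB)^{-1}A$ starting from $A(I-BA) = (I-AB)A$, but the direct expansion has the advantage of not presupposing that $(I-AB)$ is invertible.)

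There is essentially no genuine obstacle here; the only points requiring a moment's care are bookkeeping the dimensions ($A$ is $a\times b$ and $B$ is $b\times a$, so $BA$ is $b\times b$ while $AB$ is $a\times a$, and the two identity matrices appearing in the statement have different sizes), and the observation that for square matrices checking one side of the inverse relation already gives the two-sided inverse. If one prefers to avoid invoking that last fact, the symmetric computation $\bigl(I + A(I-BA)^{-1}B\bigr)(I-AB) = I$ goes through by the identical algebra with the roles of left and right multiplication exchanged.
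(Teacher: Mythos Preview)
Your proposal is correct and follows essentially the same approach as the paper: both verify directly that $(I-AB)\bigl(I + A(I-BA)^{-1}B\bigr)=I$ by expanding and simplifying. The algebraic regrouping differs only cosmetically---the paper writes $ABA = A\bigl(I-(I-BA)\bigr)$ whereas you factor as $A(I-BA)(I-BA)^{-1}B$---but the computation is the same, and your remarks on dimensions and the sufficiency of a one-sided inverse are welcome points of care that the paper leaves implicit.
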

\begin{proof}
    It suffices to verify that the product of $(I-AB)$ with the right hand side of the above equation yields the identity matrix. 
    Indeed, we have 
    \begin{align*}
     & (I-AB)\grp{I + A(I-BA)^{-1}B} \\= \ & I + A(I-BA)^{-1}B - AB - ABA(I-BA)^{-1}B 
    \\= \ &   I + A(I-BA)^{-1}B - AB - A\big (I - (I-BA)\big )(I-BA)^{-1}B 
    \\= \ &  I + A(I-BA)^{-1}B - AB - A(I-BA)^{-1}B + AB,
    \end{align*}
    which simplifies to $I$, as desired.
\end{proof}

\begin{proposition}
    \label{lem:compute-inverse}
    Let $A$ be an $a\times a$ matrix over $\mathbb{F}_p$.
    We can compute the inverse $A^{-1}$ (if it exists) in $O(a^{\omega})$ field operations.
\end{proposition}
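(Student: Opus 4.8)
The plan is to reduce matrix inversion to matrix multiplication by a divide-and-conquer argument, so that the $O(a^\omega)$ bound for multiplying $a\times a$ matrices transfers to inversion. Write $A$ in block form
\[
A = \begin{pmatrix} A_{11} & A_{12} \\ A_{21} & A_{22} \end{pmatrix},
\]
where each block is roughly $(a/2)\times(a/2)$. If $A_{11}$ is invertible, then, letting $S = A_{22} - A_{21}A_{11}^{-1}A_{12}$ denote its Schur complement, one checks by direct multiplication that $A$ is invertible if and only if $S$ is, in which case
\[
A^{-1} = \begin{pmatrix} A_{11}^{-1} + A_{11}^{-1}A_{12}S^{-1}A_{21}A_{11}^{-1} & -\,A_{11}^{-1}A_{12}S^{-1} \\ -\,S^{-1}A_{21}A_{11}^{-1} & S^{-1}\end{pmatrix}.
\]
Evaluating the right-hand side uses two recursive inversions of $(a/2)\times(a/2)$ matrices (namely $A_{11}$ and $S$) together with a constant number of multiplications of $(a/2)\times(a/2)$ matrices, giving the recurrence $T(a) \le 2\,T(a/2) + O(a^\omega)$. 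Since $\omega \ge 2 > 1$, the top level of the recursion dominates and this solves to $T(a) = O(a^\omega)$.

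The main obstacle is that $A_{11}$ --- or a Schur complement arising deeper in the recursion --- may be singular even when $A$ itself is invertible, so the block formula above does not apply as stated. The standard fix is to first compute an $LUP$ decomposition $PA = LU$, where $P$ is a permutation matrix, $L$ is unit lower-triangular, and $U$ is upper-triangular. Such a decomposition exists whenever $A$ is invertible, and it can be computed in $O(a^\omega)$ field operations by the classical divide-and-conquer algorithm over blocks of rows (Bunch--Hopcroft): recurse on the left half of the columns to put it in echelon form, use the resulting pivot rows to eliminate entries in the right half via matrix multiplication, and recurse on what remains. Invertibility of $A$ is detected in the process: $A$ is singular precisely when this procedure fails to produce a $U$ with all diagonal entries nonzero.

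It then remains to invert the triangular factors and combine. Applying the block recursion of the first paragraph to a triangular matrix never runs into a singular pivot: the diagonal blocks encountered are themselves triangular, hence invertible exactly when their diagonal entries are nonzero, which is automatic for the unit-triangular $L$ and holds for $U$ since $A$, and therefore $U$, is invertible. Thus $L^{-1}$ and $U^{-1}$ are each computed in $O(a^\omega)$ field operations, and finally $A^{-1} = U^{-1}L^{-1}P$ is assembled with two further matrix products. Altogether the cost is $O(a^\omega)$ field operations, as claimed; this bound is classical and we will only invoke it as a black box.
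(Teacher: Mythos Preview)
Your proof is correct: this is the classical Bunch--Hopcroft/Strassen reduction of inversion to multiplication, and you have handled the singular-pivot issue properly via the $LUP$ decomposition. The paper, for its part, does not prove this proposition at all---it is stated in the preliminaries as a well-known fact with no argument given (and no citation either, though the result goes back to Strassen and Bunch--Hopcroft). So you have supplied strictly more than the paper does here; your closing remark that the bound is classical and will be used as a black box is exactly the spirit in which the paper invokes it.
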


\begin{proposition}[{\cite[Theorem 1.1]{fast-matrix-rank}}]
    \label{lem:rank}
    Let $A$ be an $a\times b$ matrix over $\mathbb{F}_p$.
    Then for any positive integer $k$, we can compute $\min(k,\rank A)$ in $O(ab+k^\omega)$ field operations.
\end{proposition}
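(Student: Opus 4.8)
The plan is to reduce the computation of $\min(k,\rank A)$ to computing the rank of a single $k\times k$ matrix, obtained by multiplying $A$ on both sides by random matrices, and to arrange this so that forming that $k\times k$ matrix costs only $O(ab)$ field operations. First I would assume without loss of generality that $k\le\min(a,b)$, since otherwise $\min(k,\rank A)=\rank A$ and we may simply replace $k$ by $\min(a,b)$.

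The starting point is the standard fact that for a uniformly random $b\times k$ matrix $R$ and a uniformly random $k\times a$ matrix $L$ over $\mathbb{F}_p$, one has $\rank(LAR)=\min(k,\rank A)$ with high probability. Indeed, $\rank(LAR)\le\min(k,\rank A)$ always, and for the reverse inequality one notes that the columns of $AR$ are independent, each uniformly distributed over the column space of $A$, so with probability $1-O(k/p)$ they span a subspace of dimension $\min(k,\rank A)$; a symmetric argument applies to the rows of $LAR$. (If $p$ is too small for this failure probability to be negligible, one works over an extension field $\mathbb{F}_{p^c}$ with $c=O(\log_p(ab))$, at the cost of a polylogarithmic factor per field operation.) Given $LAR$, its rank — necessarily at most $k$ — is then computed in $O(k^\omega)$ field operations by Gaussian elimination, or via \Cref{lem:compute-inverse} applied to its nonsingular leading submatrices.

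The only difficulty is the cost of forming $LAR$: multiplying $A$ by a dense random $b\times k$ matrix naively costs $O(abk)$ field operations, a factor $k$ too slow. To fix this I would replace the dense random matrices $L$ and $R$ by \emph{structured} random matrices — concretely, random Toeplitz (or Hankel) matrices, specified by only $O(a+b+k)$ random field elements. For such an $R$, each row of $AR$ is the correlation of the corresponding row of $A$ with the defining sequence of $R$, so $AR$ is computed with $a$ convolutions of length $O(b)$, i.e.\ in $\tilde O(ab)$ field operations via fast convolution; the subsequent left multiplication by the Toeplitz matrix $L$ is handled the same way in $\tilde O(ak)\le\tilde O(ab)$ operations. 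The main obstacle — and the technical heart of the argument — is then to show that random \emph{Toeplitz} matrices still achieve $\rank(LAR)=\min(k,\rank A)$ with high probability; unlike the fully random case this does not follow from a naive count of independent variables, but it is exactly the kind of statement established by Kaltofen--Saunders-style preconditioning arguments, which show that a random Toeplitz (or butterfly) matrix acts as a generic rank-preserving projection. With this in hand the total cost is $\tilde O(ab)+O(k^\omega)$; sharpening the projection step to use the sparse preconditioners of the cited work in place of Toeplitz matrices removes the logarithmic factors and yields the stated $O(ab+k^\omega)$ bound.
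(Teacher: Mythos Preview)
The paper does not prove this proposition; it is quoted as a black box from \cite[Theorem 1.1]{fast-matrix-rank}, with no argument supplied. There is therefore nothing in the present paper to compare your attempt against.

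Taken on its own merits, your sketch is a correct outline of how such a result is established. The reduction to $\rank(LAR)$ for random $L,R$ of inner dimension $k$, followed by an $O(k^\omega)$ rank computation on the resulting $k\times k$ matrix, is the right shape, and you correctly isolate the real bottleneck --- forming $LAR$ in $O(ab)$ rather than $O(abk)$ field operations --- and propose structured (Toeplitz) random matrices plus fast convolution to address it. The only soft spot is that your Toeplitz route yields $\tilde O(ab)+O(k^\omega)$, and you remove the remaining logarithmic factors only by gesturing at ``the sparse preconditioners of the cited work'' without naming the construction; so what you have actually written is a proof of the $\tilde O$ version, with the sharp $O(ab+k^\omega)$ bound still delegated to the reference. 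That is not a gap so much as an acknowledged incompleteness. (A minor aside: the rank-preservation property you need for random Toeplitz projections is weaker than, though in the same spirit as, the Kaltofen--Saunders generic-rank-profile statement you invoke; it follows directly by a Schwartz--Zippel argument once one exhibits a single Toeplitz matrix achieving full rank.)
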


\begin{proposition}[Schwartz-Zippel Lemma {\cite[Theorem 7.2]{randomized-book}}]
    \label{schwartz-zippel}
    Let $f \in \mathbb{F}_p[x_1, \dots, x_r]$ be a degree $d$, nonzero polynomial.
    Let $\vec{a}$ be a uniform random point in $\mathbb{F}_p^r$.
    Then $f(\vec{a})$ is nonzero with probability at least $1-d/p$.
\end{proposition}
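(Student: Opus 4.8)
The plan is to prove this by induction on the number of variables $r$, which is the classical approach. For the base case $r = 1$, I would invoke the standard fact that a nonzero univariate polynomial of degree $d$ over the field $\mathbb{F}_p$ has at most $d$ roots; hence a uniform random $a_1 \in \mathbb{F}_p$ is a root with probability at most $d/p$, so $f(a_1) \neq 0$ with probability at least $1 - d/p$, as claimed.

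For the inductive step, suppose the statement holds for all nonzero polynomials in fewer than $r$ variables. Given nonzero $f \in \mathbb{F}_p[x_1, \dots, x_r]$ of degree $d$, I would group its monomials according to the power of the last variable, writing
\[ f(x_1, \dots, x_r) = \sum_{i=0}^{k} x_r^{\,i}\, g_i(x_1, \dots, x_{r-1}), \]
where $k \le d$ is the largest exponent of $x_r$ that actually appears, so that $g_k$ is a nonzero polynomial in $r-1$ variables of total degree at most $d - k$. Write $\vec{a} = (a_1, \dots, a_r)$ and $\vec{a}' = (a_1, \dots, a_{r-1})$, and split the event $\{f(\vec{a}) = 0\}$ depending on whether $g_k(\vec{a}') = 0$. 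By the inductive hypothesis applied to $g_k$, we have $\Pr[g_k(\vec{a}') = 0] \le (d-k)/p$. On the other hand, conditioned on $g_k(\vec{a}') \neq 0$, the single-variable polynomial $f(\vec{a}', x_r) \in \mathbb{F}_p[x_r]$ has leading coefficient $g_k(\vec{a}') \neq 0$, hence degree exactly $k$, hence at most $k$ roots; so by the base case $\Pr[\, f(\vec{a}', a_r) = 0 \mid g_k(\vec{a}') \neq 0 \,] \le k/p$. Combining the two cases via the law of total probability and a union bound gives
\[ \Pr[f(\vec{a}) = 0] \;\le\; \frac{d-k}{p} + \frac{k}{p} \;=\; \frac{d}{p}, \]
and taking complements yields the stated bound.

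There is no serious obstacle here, but the one step I would be careful about is the conditioning argument: I must use that the coordinates of $\vec{a}$ are drawn independently, so that conditioning on the event $\{g_k(\vec{a}') \neq 0\}$ — which depends only on $\vec{a}'$ — leaves $a_r$ uniform over $\mathbb{F}_p$ and independent of $\vec{a}'$; and I must note that it is precisely the nonvanishing of $g_k(\vec{a}')$ that prevents $f(\vec{a}', x_r)$ from degenerating to a polynomial of degree less than $k$ (or identically zero), which is what makes the base-case bound applicable. Everything else is routine bookkeeping with degrees.
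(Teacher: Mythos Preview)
Your argument is correct and is the standard inductive proof of the Schwartz--Zippel Lemma. Note, however, that the paper does not prove this statement at all: it is simply quoted as a known result with a citation to \cite[Theorem 7.2]{randomized-book}, so there is no ``paper's own proof'' to compare against.
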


\section{Proof Overview}
\label{sec:technical-overview}

\subsection{Flow Vector Encodings}
\label{subsec:vector-encoding}

Previous algorithms for \textsf{APC} \cite{apc} and its variants work in two steps:

\begin{description}
    \item[Step 1: Encode]\hfill
    
        In this step, we prepare a matrix $M$ which implicitly encodes the connectivity information of the input graph.
    
    \item[Step 2: Decode]\hfill

        In this step, we iterate over all pairs $(s,t)$ of vertices in the graph, and for each pair run a small computation on a submatrix of $M$ to compute the desired connectivity value.
\end{description}

\noindent The construction in the \textbf{encode} step is based off the framework of \emph{flow vectors}, introduced in \cite{apc} as a generalization of classical techniques from network-coding. 
We give a high-level overview of how this method has been previously applied in the \textsf{APC} problem.\footnote{For the \textsf{APVC} problem\ we employ a different, but analogous, framework  described in \cref{subsec:apvc}. }

Given the input graph $G$, we fix a source vertex $s$. 
Let $d = \outdeg(s)$, and let $\mathbb{F}$ be some ground field.\footnote{In our applications, we will pick $\mathbb{F}$ to be a finite field of size $\poly(m)$.}
Our end goal is to assign to each edge $e$ in the graph a special vector $\vec{e}\in\mathbb{F}^d$ which we call a \emph{flow vector}.

First, for each edge $e \in \outedges{s}$, we introduce a $d$-dimensional vector $\vec{v_e}$.
These vectors intuitively correspond to some starting flow that is pumping out of $s$. 
It is important that these vectors are linearly independent (and previous applications have always picked these vectors to be distinct $d$-dimensional unit vectors). 
We then push this flow through the rest of the graph, 
by having each edge get assigned a vector which is a random linear combination of the flow vectors assigned to the edges entering its tail.
That is, given an edge $e = (u,v)$ with $u\neq s$, the final flow vector $\vec{e}$ will be a random linear combination of the flow vectors for the edges entering $u$. 
If instead the edge $e= (s,v)$ is in $\outedges{s}$, the final flow vector $\vec{e}$ will be a random linear combination of the flow vectors for the edges entering $s$, added to the initial flow $\vec{v}_e$.

The point of this random linear combination is to (with high probability) preserve linear independence.
In this setup, for any vertex $v$ and integer $\ell$, if some subset of $\ell$ flow vectors assigned to edges in $\inedges{v}$ is independent, then we expect that every subset of at most $\ell$ flow vectors assigned to edges in $\outedges{v}$ is also independent.
This sort of behavior turns out to generalize to preserving linear independence of flow vectors across cuts, which implies that (with high probability) for any vertex $t$, the rank of the flow vectors assigned to edges in $\inedges{t}$ equals $\lambda(s,t)$.

Intuitively, this is because the flow vectors assigned to edges in $\inedges{t}$ will be a linear combination of the $\lambda(s,t)$ flow vectors assigned to edges in a minimum $(s,t)$-cut, and the flow vectors assigned to edges in this cut should be independent.

Collecting all the flow vectors as column vectors in a matrix allows us to produce a single matrix $M_s$, such that computing the rank of $M_s[\ast, \inedges{t}]$ yields the desired  connectivity value $\lambda(s,t)$ (computing these ranks constitutes the \textbf{decode} step mentioned previously).
Previous work \cite{apc,ap-bounded-mincut} set the initial pumped $\vec{v}_e$ to be distinct unit vectors.
It turns out that for this choice of starting vectors, it is possible to construct a single matrix $M$ (independent of a fixed choice of $s$), such that rank queries to submatrices of $M$ correspond to the answers we wish to output in the \textsf{APC} problem and its variants.

In \Cref{subsec:apc} we describe how we employ the flow vector framework to prove \Cref{thm:flow}.
Then in \Cref{subsec:apvc}, we describe how we modify these methods to prove \Cref{thm:vertex-flow}.

\subsection{All-Pairs Connectivity}
\label{subsec:apc}

Our starting point is the $\tilde{O}(m^\omega)$ time algorithm for \textsf{APC} presented in \cite{apc}, which uses the flow vector encoding scheme outlined in \Cref{subsec:vector-encoding}.

Let $K$ be an $m\times m$ matrix, whose rows and columns are indexed by edges in the input graph.
For each pair $(e,f)$ of edges, if the head of $e$ coincides with the tail of $f$, we set $K[e,f]$ to be a uniform random field element in $\mathbb{F}$.
Otherwise, $K[e,f] = 0$.
These field elements correspond precisely to the coefficients used in the random linear combinations of the flow vector framework.
Define the matrix
    \begin{equation}
    \label{eq:old-apc-def}
    M = (I - K)^{-1}.
    \end{equation}
Then \cite{apc} proves that with high probability, for any pair $(s,t)$ of vertices, we have 
    \begin{equation}
    \label{eq:old-apc-rank}
    \rank M[\outedges{s},\inedges{t}] = \lambda(s,t).
    \end{equation}
With this setup, the algorithm for \textsf{APC} is simple: first compute $M$ (the \textbf{encode} step), and then for each pair of vertices $(s,t)$, return the value of $\rank M[\outedges{s},\inedges{t}]$ as the connectivity from $s$ to $t$ (the \textbf{decode} step). 

By \Cref{eq:old-apc-def}, we can complete the \textbf{encode} step in  $\tilde{O}(m^\omega)$ time, simply by inverting an $m\times m$ matrix with entries from $\mathbb{F}$.
It turns out we can also complete the \textbf{decode} step in the same time bound. 
So this gives an $\tilde{O}(m^\omega)$ time algorithm for \textsf{APC}.

Suppose now we want to solve the \textsf{$k$-APC} problem.
A simple trick (observed in the proof of \cite[Theorem 5.2]{ap-bounded-mincut} for example) in this setting can allow us to speed up the runtime of the \textbf{decode} step.
However, it is not at all obvious how to speed up the \textbf{encode} step.
To implement the flow vector scheme of \Cref{subsec:vector-encoding} as written, it seems almost inherent that one needs to  invert an $m\times m$ matrix.
Indeed, an inability to overcome this bottleneck is stated explicitly as part of the motivation in \cite{ap-bounded-mincut} for focusing on the \textsf{$k$-APVC} problem instead. 

\subsubsection*{Our Improvement}

The main idea behind our new algorithm for \textsf{$k$-APC} is to work with a low-rank version of the matrix $K$ used in \cref{eq:old-apc-def} for the \textbf{encode} step.

Specifically, we construct certain random sparse matrices $L$ and $R$ with dimensions $m\times kn$ and $kn\times m$ respectively.
We then set $K = LR$, and argue that with high probability, the matrix $M$ defined in \cref{eq:old-apc-def} for this choice of $K$ satisfies 
    \begin{equation}
    \label{eq:low-rank-intuition}
    \rank M[\outedges{s},\inedges{t}] = \min(k,\lambda(s,t)).
    \end{equation}
This equation is just a $k$-bounded version of \cref{eq:old-apc-rank}.
By \Cref{lem:inverse}, we have 
    \[M = (I-K)^{-1} = (I-LR)^{-1} = I + L(I-RL)^{-1}R.\]
Note that $(I-RL)$ is a $kn\times kn$ matrix.
So, to compute $M$ (and thus complete the \textbf{encode} step) we no longer need to invert an $m\times m$ matrix!
Instead we just need to invert a matrix of size $kn\times kn$.
This is essentially where the $\tilde{O}\grp{(kn)^{\omega}}$ runtime in \Cref{thm:flow} comes from.

Conceptually, this argument corresponds to assigning flow vectors through the graph by replacing random linear combinations with random ``low-rank combinations.'' 
That is, for an edge $e \in \outedges{u}$ exiting a vertex $u$, we define the flow vector at $e$ to be 
    \[\vec{e} = \sum_{i=1}^k\grp{\sum_{f\in \inedges{u}} L_i[f,u]\vec{f}}\cdot R_i[u,e],\]
where the inner summation is over all edges $f$ entering $u$, $\vec{f}$ denotes the flow vector assigned to edge $f$, and the $L_i[f,u]$ and $R_i[u,e]$ terms correspond to random field elements uniquely determined by the index $i$ and some $(\text{edge}, \text{vertex})$ pair. 

Here, unlike in the method described in \Cref{subsec:vector-encoding}, the coefficient in front of $\vec{f}$ in its contribution to $\vec{e}$ is not uniquely determined by the pair of edges $f$ and $e$.
Rather, if edge $f$ enters node $u$, then it has the same set of ``weights'' $L_i[f,u]$ it contributes to every flow vector exiting $u$. 
However, since we use $k$ distinct weights, this restricted rule for propagating flow vectors still suffices to compute $\min(k,\lambda(s,t))$.

A good way to think about the effect of this alternate approach is that now for any vertex $v$ and any integer $\ell\le k$, if some subset of $\ell$ flow vectors assigned to edges in $\inedges{v}$ is independent, then we expect that every subset of at most $\ell$ flow vectors assigned to edges in $\outedges{v}$ is also independent.
In the previous framework, this result held even for $\ell > k$.
By relaxing the method used to determine flow vectors, we achieve a weaker condition, but this is still enough to solve \textsf{$k$-APC}.

This modification makes the \textbf{encode} step more complicated (it now consists of two parts: one where we invert a matrix, and one where we multiply that inverse with other matrices), but speeds it up overall.
To speed up the \textbf{decode} step, we use a variant of an observation from the proof of \cite[Theorem 5.2]{ap-bounded-mincut} to argue that we can assume every vertex in our graph has indegree and outdegree $k$. 
By \Cref{lem:rank} and \cref{eq:low-rank-intuition}, this means we can compute $\min(k,\lambda(s,t))$ for all pairs $(s,t)$ of vertices in $\tilde{O}(k^{\omega}n^2)$ time.
So the bottleneck in our algorithm comes from the \textbf{encode} step, which yields the $\tilde{O}\grp{(kn)^\omega}$ runtime.

\subsection{All-Pairs Vertex Connectivity}
\label{subsec:apvc}

Our starting point is the $\tilde{O}\grp{(kn)^{\omega}}$ time algorithm in \cite{ap-bounded-mincut}, which computes $\min(k,\nu(s,t))$ for all pairs of vertices $(s,t)$ which are not edges.
That algorithm is based off a variant of the flow vector encoding scheme outlined \Cref{subsec:vector-encoding}.
Rather than assign vectors to edges, we instead assign flow vectors to vertices (intuitively this is fine because we are working with vertex connectivities in the \textsf{$k$-APVC} problem).
The rest of the construction is similar: we imagine pumping some initial vectors to $s$ and its out-neighbors, and then we propagate the flow through the graph so that at the end, for any vertex $v$, the flow vector assigned to $v$ is a random linear combination of flow vectors assigned to in-neighbors of $v$.\footnote{Actually, this behavior only holds for vertices $v\not\in\outnodescl{s}.$ The rule for flow vectors assigned to vertices in $\outnodescl{s}$ is a little more complicated, and depends on the values of the initial pumped vectors.}  

Let $K$ be an $n\times n$ matrix, whose rows and columns are indexed by vertices in the input graph.
For each pair $(u,v)$ of vertices, if there is an edge from $u$ to $v$, we set $K[u,v]$ to be a uniform random element in $\mathbb{F}$.
Otherwise, $K[u,v] = 0$.
These entries correspond precisely to coefficients used in the random linear combinations of the flow vector framework.

Now define the matrix
    \begin{equation}
    \label{apvc-M}
    M = (I-K)^{-1}.
    \end{equation}
Then we argue that for any pair $(s,t)$ of vertices, we have 
    \begin{equation}
    \label{apvc-ranks}
    \rank M[\outnodescl{s},\innodescl{t}] = \begin{cases}\nu(s,t) + 1 & \text{if }(s,t)\text{ is an edge} \\ \nu(s,t) & \text{otherwise}. \end{cases}
    \end{equation}
Previously, \cite[Proof of Lemma 5.1]{ap-bounded-mincut} sketched a different argument, which shows that $\rank M[\outnodes{s},\innodes{t}] = \nu(s,t)$ when $(s,t)$ is not an edge.
As we discuss in \Cref{correctingtheliterature}, this  claim does not necessarily hold when $(s,t)$ is an edge.

We use \Cref{apvc-ranks} to solve \textsf{$k$-APVC}.
For the \textbf{encode} step, we compute $M$.
By \cref{apvc-M}, we can do this by inverting an $n\times n$ matrix, which takes $\tilde{O}(n^\omega)$  time.
For the \textbf{decode} step, by \cref{apvc-ranks} and \Cref{lem:rank}, we can compute $\min(k,\nu(s,t))$ for all pairs $(s,t)$ of vertices in  asymptotically
    \[\sum_{s, t} \grp{\outdeg(s)\indeg(t) + k^\omega} = m^2 + k^\omega n^2\]
    time,
where the sum is over all vertices $s$ and $t$ in the graph.
The runtime bound we get here for the \textbf{decode} step is far too high -- naively computing the ranks of submatrices is too slow if the graph has many high-degree vertices.

To avoid this slowdown, \cite{ap-bounded-mincut} employs a simple trick to reduce degrees in the graph: we can add layers of $k$ new nodes to block off the ingoing and outgoing edges from each vertex in the original graph.
That is, for each vertex $s$ in $G$, we add a set $S$ of $k$ new nodes, 
 replace the edges in $\outedges{s}$ with edges from  $s$ to all the nodes in $S$, and add edges from every node in $S$ to every vertex originally in $\outnodes{s}$.
Similarly, for each vertex $t$ in $G$, we add a set $T$ of $k$ new nodes, 
 replace the edges in $\inedges{t}$ with edges from all the nodes in $T$ to $t$, and add edges from every vertex originally  in $\innodes{t}$ to every node in $T$.

It is easy to check that this transformation preserves the value of $\min(k,\nu(s,t))$ for all pairs $(s,t)$ of vertices in the original graph where $(s,t)$ is not an edge.
Moreover, all vertices in the original graph have indegree and outdegree exactly $k$ in the new graph.
Consequently, the \textbf{decode} step can now be implemented to run in $\tilde{O}(k^\omega n^2)$ time.
Unfortunately, this construction increases the number of vertices in the graph from $n$ to $(2k+1)n$.
As a consequence, in the \textbf{encode} step, the matrix $K$ we work with is no longer $n\times n$, but instead is of size $(2k+1)n\times (2k+1)n$.
Now  inverting $I-K$ to compute $M$ requires $\tilde{O}\grp{(kn)^\omega}$ time, which is why \cite{ap-bounded-mincut} obtains this runtime for their algorithm.

\subsubsection*{Our Improvement}

Intuitively, the modification used by \cite{ap-bounded-mincut} to reduce degrees in the graph feels very inefficient.
This transformation makes the graph larger in order to ``lose information'' about connectivity values greater than $k$.
Rather than modify the graph in this way, can we modify the flow vector scheme itself to speed up the \textbf{decode} step? 
Our algorithm does this, essentially modifying the matrix of flow vectors to simulate the effect of the previously described transformation, without ever explicitly adding  new nodes to the graph.

Instead of working directly with the matrix $M$ from \Cref{apvc-M}, for each pair $(s,t)$ of vertices we define a $(k+1)\times (k+1)$ matrix 
    \[M_{s,t} = B_s \grp{M[\outnodescl{s},\innodescl{t}]} C_t\]
which is obtained from multiplying a submatrix of $M$ on the left and right by small random matrices $B_s$ and $C_t$, with $k+1$ rows and columns respectively.
Since $B_s$ has $k+1$ rows and $C_t$ has $k+1$ columns, we can argue that with high probability, \Cref{apvc-ranks} implies that
    \[\rank M_{s,t} = \begin{cases} \min(k+1,\nu(s,t)+1) & \text{if }(s,t)\text{ is an edge} \\
    \min(k+1,\nu(s,t)) & \text{otherwise}.\end{cases}\]
So we can compute $\min(k,\nu(s,t))$ from the value of $\rank M_{s,t}$.
This idea is similar to the preconditioning method used in algorithms for computing matrix rank efficiently (see \cite{fast-matrix-rank} and the references therein).
Conceptually, we can view this approach as a modification of the flow vector framework.
Let $d = \outdeg(s)$.
As noted in \Cref{subsec:vector-encoding}, previous work 

\begin{enumerate}
\item starts by pumping out distinct $d$-dimensional unit vectors to nodes in $\outnodes{s}$, and then
\item computes the rank of all flow vectors of vertices in $\innodes{t}$.
\end{enumerate}

\noindent In our work, we instead 

\begin{enumerate}
    \item start by pumping out $(d+1)$ random $(k+1)$-dimensional vectors to nodes in $\outnodescl{s}$, and then
    \item
     compute the rank of $(k+1)$ random linear combinations of flow vectors for vertices in $\innodescl{t}$.
\end{enumerate}

\noindent This alternate approach suffices for solving the \textsf{$k$-APVC} problem, while avoiding the slow $\tilde{O}((kn)^\omega)$ \textbf{encode} step of previous work.

So, in the \textbf{decode} step of our algorithm, we compute $\min(k,\nu(s,t))$ for each pair $(s,t)$ of vertices by computing the rank of the $(k+1)\times (k+1)$ matrix $M_{s,t}$, in $\tilde{O}(k^\omega n^2)$ time overall.

Our \textbf{encode} step is more complicated than previous work, because not only do we need to compute the inverse $(I-K)^{-1}$, we also have to construct the $M_{s,t}$ matrices.
Naively computing each $M_{s,t}$ matrix separately is too slow, so we end up using an indirect approach to compute all entries of the $M_{s,t}$ matrices \emph{simultaneously}, with just $O(k^2)$ multiplications of $n\times n$ matrices.
This takes $\tilde{O}(k^2n^\omega)$ time, which is the bottleneck for our algorithm.


\section{Flow Vector Encoding}
\label{sec:flow}

The arguments in this section are similar to the arguments from \cite[Section 2]{apc}, but involve more complicated proofs because we work with low-rank random matrices as opposed to generic random matrices.

Fix a source vertex $s$ in the input graph $G$. 
Let $d = \outdeg(s)$ denote the number of edges leaving $s$.
Let $e_1, \dots, e_d \in \outedges{s}$ be the outgoing edges from $s$.

Take a prime $p = \Theta(m^5)$.
Let $\vec{u}_1, \dots, \vec{u}_d$ be distinct unit vectors in $\mathbb{F}_p^d$.

Eventually, we will assign each edge $e$ in $G$ a  vector $\vec{e}\in\mathbb{F}_p^d$, which we call a \emph{flow vector}.
These flow vectors will be determined by a certain system of vector equations.
To describe these equations, we first introduce some symbolic matrices.

For each index $i \in [k]$, we define an $m\times n$ matrix $X_i$, whose rows are indexed by edges of $G$ and columns are indexed by vertices of $G$, such that for each edge $e = (u,v)$, entry $X_i[e,v] = x_{i,ev}$ is an indeterminate.
All entries in $X_i$ not of this type are zero.

Similarly, we define $n\times m$ matrices $Y_i$, with rows indexed by vertices of $G$ and columns indexed by edges of $G$, such that for every edge $f = (u,v)$, the entry $Y_i[u,f] = y_{i,uf}$ is an indeterminate.
All entries in $Y_i$ not of this type are zero.

Let $X$ be the $m\times kn$ matrix formed by horizontally concatenating the $X_i$ matrices.
Similarly, let $Y$ be the $kn\times m$ matrix formed by vertically concatenating the $Y_i$ matrices.
Then we define the matrix
    \begin{equation}
    \label{eq:symbolic-transfer}
    Z = XY = X_1Y_1 + \dots + X_kY_k.
    \end{equation}
By construction, $Z$ is an $m\times m$ matrix, with rows and columns indexed by edges of $G$, such that for any edges $e = (u,v)$ and $f = (v,w)$, we have 
    \begin{equation}
    \label{eq:symbolic-entries}
    Z[e,f] = \sum_{i=1}^k x_{i,ev}y_{i,vf}
    \end{equation}
and all other entries of $Z$ are set to zero.

Consider the following procedure.
We assign independent, uniform random values from $\mathbb{F}_p$ to each variable $x_{i,ev}$ and $y_{i,uf}$.
Let $L_i, L, R_i, R$, and $K$ be the matrices over $\mathbb{F}_p$ resulting from this assignment to $X_i, X, Y_i, Y$, and $Z$ respectively.
In particular, we have
\begin{equation}
\label{claim:low-rank}
    K = LR.
\end{equation}
Now, to each edge $e$, we assign a flow vector $\vec{e} \in  \mathbb{F}_p^d $, satisfying the following equalities:

\begin{enumerate}
    \item 
    Recall that $e_1, \dots, e_d$ are all the edges exiting $s$, and $\vec{u}_1,\dots,\vec{u}_d$ are distinct unit vectors in $\mathbb{F}_p^d$. 
    For each edge $e_i \in\outedges{s}$, we require its flow vector satisfy
        \begin{equation}
        \label{eq:flow-sout}
        \vec{e}_i = \left(\sum_{
        f\in \inedges{s}
        } \vec{f}\cdot K[f,e_i]\right) +  \vec{u}_i.
        \end{equation}

    \item
    For each edge $e = (u,v)$ with $u\neq s$, we require its flow vector satisfy
            \begin{equation}
            \label{eq:flow-general}
            \vec{e} = \sum_{f\in \inedges{u}} \vec{f}\cdot K[f,e].
            \end{equation}
\end{enumerate}
A priori it is not obvious that flow vectors satisfying the above two conditions exist, but we show below that they do (with high probability).
Let $H_s$ be the $d\times m$ matrix whose columns are indexed by edges in $G$, 
such that the column associated with $e_i$ is $\vec{u}_i$ for each index $i$, and the rest of the columns are zero vectors. 
Let $F$ be the $d\times m$ matrix, with columns indexed by edges in $G$, whose columns $F[\ast,e] = \vec{e}$ are flow vectors for the corresponding edges.
Then \cref{eq:flow-sout,eq:flow-general} are encapsulated by the simple matrix equation
    \begin{equation}
    \label{eq:flow-matrix-equation}
    F = FK + H_s.
    \end{equation}
The following lemma shows we can solve for $F$ in the above equation, with high probability.

\begin{lemma}
    \label{lem:invertible}
    We have $\det(I-K)\neq 0$, with probability at least $1-1/m^3$.
\end{lemma}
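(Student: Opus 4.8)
The plan is to recognize $\det(I-K)$ as the evaluation of a symbolic polynomial and apply the Schwartz--Zippel Lemma (\Cref{schwartz-zippel}). Let $g = \det(I-Z)$, the determinant of $I-Z$ treated as a matrix of polynomials in the indeterminates $x_{i,ev}$ and $y_{i,uf}$. By construction, $\det(I-K)$ is exactly the value of $g$ under the chosen uniform random assignment of field elements to these indeterminates. So it suffices to show that $g$ is a \emph{nonzero} polynomial of degree $O(m)$: then a uniform random assignment keeps it nonzero with probability at least $1-O(m)/p$, which is at least $1-1/m^3$ for $p = \Theta(m^5)$ with the constant chosen large enough (and trivially for $m$ below any fixed constant).

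First I would bound the degree. By \cref{eq:symbolic-entries}, each off-diagonal entry $Z[e,f]$ is either $0$ or of the form $\sum_{i=1}^k x_{i,ev}y_{i,vf}$, a polynomial of degree $2$, while each diagonal entry $Z[e,e]$ is $0$ because $G$ has no self-loops. Hence every entry of $I-Z$ has degree at most $2$, and its determinant $g$ — a signed sum of products of $m$ such entries — has degree at most $2m$.

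Next I would verify $g\not\equiv 0$. Substitute $0$ for every indeterminate: then $X$ and $Y$ both become the zero matrix, so $Z=XY$ becomes the zero matrix, and $g$ evaluates to $\det(I)=1\neq 0$. Thus $g$ has constant term $1$ and in particular is not the zero polynomial. Applying \Cref{schwartz-zippel} to $g$ with degree bound $2m$ then gives that the random point makes $g$ nonzero — equivalently, makes $I-K$ invertible — with probability at least $1-2m/p\ge 1-1/m^3$.

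The only step that genuinely requires care is confirming $g\not\equiv 0$: a priori one might worry that the heavily constrained sparsity pattern of $X$ and $Y$, together with the rank-$kn$ factorization $Z=XY$, could force $\det(I-Z)$ to vanish identically. The all-zeros substitution settles this in one line. (If a sharper degree bound were wanted, one could instead analyze $\det(I-RL)$ via the identity $\det(I_m-XY)=\det(I_{kn}-YX)$, giving degree at most $2kn$; this is not needed for the stated bound.)
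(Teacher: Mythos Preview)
Your proposal is correct and follows essentially the same argument as the paper: bound the degree of $\det(I-Z)$ by $2m$ using that each entry of $Z$ has degree at most two, observe the constant term is $1$ (equivalently, the all-zeros substitution gives $\det I=1$), and apply Schwartz--Zippel with $p\ge 2m^4$. The only difference is cosmetic---the paper phrases the nonvanishing via ``constant term equal to $1$'' rather than an explicit substitution, and does not include your side remark about $\det(I_{kn}-YX)$.
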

\begin{proof}
    Since the input graph has no self-loops, by \cref{eq:symbolic-entries} and the discussion immediately following it, we know that the diagonal entries of the $m\times m$ matrix $Z$ are zero.
    By \cref{eq:symbolic-entries}, each entry of $Z$ is a polynomial of degree at most two, with constant term set to zero. Hence, $\det(I-Z)$ is a polynomial over $\mathbb{F}_p$ with degree at most $2m$, and constant term equal to $1$.
    In particular, this polynomial is nonzero.
    Then by the Schwartz-Zippel Lemma (\Cref{schwartz-zippel}), $\det(I-K)$ is nonzero with probability at least 
    \[1-2m/p \ge 1-1/m^3\]
    by setting $p\ge 2m^4$.
\end{proof}

    Suppose from now on that $\det(I-K)\neq 0$ (by \cref{lem:invertible}, this occurs with high probability).
    Then with this assumption, we can solve for $F$ in \cref{eq:flow-matrix-equation} to get

        \begin{equation}
        \label{eq:flow-closed-form}
        F = H_s(I-K)^{-1} = \frac{H_s\grp{\text{adj}(I-K)}}{\det(I-K)}.
        \end{equation}
    This equation will allow us to relate ranks of collections of flow vectors to connectivity values in the input graph.

    \begin{lemma}
    \label{lem:conn-upper-bound}
    For any vertex $t$ in $G$, with probability at least $1-2/m^3$, we have
        \[\rank F[\ast, \inedges{t}] \le \lambda(s,t).\]
    \end{lemma}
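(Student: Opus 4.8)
The plan is to show that, with high probability, every flow vector of an edge entering $t$ lies in a common subspace of dimension at most $\lambda(s,t)$. Fix a minimum $(s,t)$-cut and let $S$ be the set of vertices reachable from $s$ once its edges are deleted, so $s\in S$ and $t\notin S$. Let $\bar S := V(G)\setminus S$ and let $C$ be the set of edges directed from $S$ to $\bar S$. Deleting $C$ leaves $\bar S$ (hence $t$) unreachable from $s$, so $C$ is an $(s,t)$-cut; moreover every edge from $S$ to $\bar S$ must lie in the deleted minimum cut (its head would otherwise be reachable from $s$ as well), so $C$ is contained in that cut and therefore $|C| = \lambda(s,t)$. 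Setting $U_C = \mathrm{span}\{\vec f : f\in C\}$, we have $\dim U_C \le \lambda(s,t)$, and it suffices to prove $\vec e\in U_C$ for every edge $e$ whose head lies in $\bar S$; this covers all of $\inedges{t}$ since $t\in\bar S$.

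I would then split the edges with head in $\bar S$ into those in $C$ (whose flow vectors lie in $U_C$ by definition) and the set $E''$ of edges with both endpoints in $\bar S$. Any $e=(u,v)\in E''$ has $u\in\bar S$, hence $u\ne s$, so its flow vector obeys \cref{eq:flow-general}: $\vec e = \sum_{f\in\inedges{u}} \vec f\cdot K[f,e]$. Every $f\in\inedges{u}$ has head $u\in\bar S$, so $f\in C\cup E''$. Let $\pi$ be any linear map with kernel $U_C$; then $\pi(\vec f)=\vec 0$ for $f\in C$, so reducing the identity modulo $U_C$ leaves $\pi(\vec e) = \sum_{f\in E''} \pi(\vec f)\,K[f,e]$ for all $e\in E''$ (entries $K[f,e]$ with $\mathrm{head}(f)\ne\mathrm{tail}(e)$ vanish). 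Collecting the vectors $\pi(\vec e)$, $e\in E''$, as the columns of a matrix $\Phi$, this is precisely $\Phi = \Phi\cdot K[E'',E'']$, i.e. $\Phi\bigl(I - K[E'',E'']\bigr) = 0$.

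It remains to argue that $I - K[E'',E'']$ is invertible with high probability; once this holds, $\Phi=0$, so $\pi(\vec e)=\vec 0$ and hence $\vec e\in U_C$ for all $e\in E''$, giving $\rank F[\ast,\inedges{t}] \le \dim U_C \le \lambda(s,t)$. Crucially this does \emph{not} follow from \cref{lem:invertible}, since a principal submatrix of an invertible matrix may be singular; instead I would rerun the argument of that lemma on the submatrix. The set $E''$ depends only on $G$, $s$, $t$, not on the random field elements, so $Z[E'',E'']$ is a well-defined symbolic matrix; by \cref{eq:symbolic-entries} it has zero diagonal (the graph has no self-loops) and entries of degree at most two, so $\det\bigl(I - Z[E'',E'']\bigr)$ is a nonzero polynomial (constant term $1$) of degree at most $2|E''|\le 2m$, and \cref{schwartz-zippel} bounds $\Pr[\det(I - K[E'',E''])=0] \le 2m/p \le 1/m^3$ by the choice of $p$. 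Together with the event $\det(I-K)=0$ from \cref{lem:invertible} — which is needed anyway for $F$ to be well-defined via \cref{eq:flow-closed-form} — a union bound yields the claimed failure probability $2/m^3$. I expect the main obstacle to be exactly this last point: recognizing that invertibility must be re-derived for $I-K[E'',E'']$ rather than inherited, together with the care needed to verify that the flow equation for an edge of $E''$ references only flow vectors of edges in $C\cup E''$ (which is what makes $E''$ the right set to quotient by, and which in particular uses that no tail in $\bar S$ equals $s$).
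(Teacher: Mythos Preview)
Your proof is correct and follows essentially the same approach as the paper: restrict to the edges on the $t$-side of a minimum $(s,t)$-cut, observe that the flow-vector recursion there closes up except for contributions from the cut edges, and then argue via Schwartz--Zippel that the corresponding principal submatrix $I-K[E'',E'']$ (the paper's $I-K'$) is invertible, forcing the flow vectors entering $t$ to lie in the span of the $\lambda(s,t)$ cut-edge flow vectors. The only cosmetic difference is packaging: the paper keeps the cut edges inside the submatrix and writes $F' = F'K' + H'$ with $H'$ supported on $C$, whereas you separate $C$ out and quotient by $U_C$ to get $\Phi(I-K[E'',E''])=0$; these are equivalent formulations, and you correctly flag that invertibility of the principal submatrix must be re-derived rather than inherited from \Cref{lem:invertible}.
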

    \begin{proof}
        Abbreviate $\lambda = \lambda(s,t)$.
            Conceptually, this proof works by arguing that the flow vectors assigned to all edges entering $t$ are linear combinations of the flow vectors assigned to edges in a minimum $(s,t)$-cut of $G$.
        
        Let $C$ be a minimum $(s,t)$-cut.
        By Menger's theorem, $|C|=\lambda$.

        Let $S$ be the set of nodes reachable from $s$ without using an edge in $C$, and let $T$ be the set of nodes which can reach $t$ without using an edge in $C$. 
        By definition of an $(s,t)$-cut, $S$ and $T$ partition the vertices in $G$.
        
        Let $E'$ be the set of edges $e = (u,v)$ with $v\in T$.
        
        Set $K' = K[E', E']$ and $F' = F[\ast, E'].$
        Finally, let $H'$ be a matrix whose columns are indexed by edges in $E'$, such that the column associated with an edge $e\in C$ is $\vec{e}$, and all other columns are equal to $\vec{0}$.

        Then by \cref{eq:flow-sout,eq:flow-general}, we have 
            \[F' = F'K' + H'.\]
        Indeed, for any edge $e = (u,v)\in E'$, if $u\in S$ then $e\in C$ so $H'[\ast, e] = \vec{e}$, and there can be no edge $f\in E'$ entering $u$, so $(F'K')[\ast, e] = \vec{0}$.
        If instead $u\in T$, then $H'[\ast, e] = \vec{0}$, but every edge $f$ entering $u$ is in $E'$, so by \cref{eq:flow-general}, we have $(F'K')[\ast,e] = F'[\ast,e]$ as desired.

        Using similar reasoning to the proof of \cref{lem:invertible}, we have $\det (I-K')\neq 0$ with probability at least $1-1/m^3$.
        If this event occurs, we can solve for $F'$ in the previous equation to get 
            \[F' = H'(I-K')^{-1}.\]
        Since $H'$ has at most $\lambda$ nonzero columns, $\rank H\le \lambda$.
        So by the above equation, $\rank F'\le \lambda$.
        By definition, $\inedges{t}\sub E'$.
        Thus $F[\ast,\inedges{t}]$ is a submatrix of $F'$.
        Combining this with the previous results, we see that $\rank F[\ast,\inedges{t}]\le \lambda$, as desired.
        The claimed probability bound follows by a union bound over the events that $I-K$ and $I-K'$ are both invertible.
    \end{proof}

    \begin{lemma}
    \label{lem:conn-lower-bound}
    For any vertex $t$ in $G$, with probability at least $1-2/m^3$, we have 
    \[\rank F[\ast, \inedges{t}] \ge \min(k, \lambda(s,t)).\]
    \end{lemma}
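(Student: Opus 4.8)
The plan is to produce, with high probability, a set of $\ell := \min(k,\lambda(s,t))$ edges entering $t$ whose flow vectors are linearly independent. The strategy is the usual one for arguments of this flavour: first check the desired independence for one cleverly chosen \emph{deterministic} setting of the variables $x_{i,ev}, y_{i,uf}$, then upgrade to the random setting via the Schwartz--Zippel Lemma. We may assume $\lambda := \lambda(s,t)\ge 1$ and $s\ne t$, since otherwise $\ell=0$ and the bound is vacuous. By Menger's theorem $G$ has $\lambda$ edge-disjoint $s$-$t$ paths; after shortcutting cycles we may take them simple and keep $\ell$ of them, say $P_1,\dots,P_\ell$, where $P_j=(s=v^j_0,v^j_1,\dots,v^j_{L_j}=t)$ with edges $f^j_r=(v^j_{r-1},v^j_r)$ for $r\in[L_j]$. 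Each first edge $f^j_1\in\outedges{s}$ equals $e_{i_j}$ for a unique $i_j\in[d]$; edge-disjointness makes the $i_j$ distinct and makes the last edges $f^1_{L_1},\dots,f^\ell_{L_\ell}$ distinct elements of $\inedges{t}$.

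For the deterministic assignment I would treat the index $j\in[\ell]\sub[k]$ as a ``color'' reserved for routing a signal along $P_j$: set all indeterminates to $0$ except that, for each $j\in[\ell]$ and each $r\in\set{1,\dots,L_j-1}$, set $x_{j,\,f^j_r v^j_r}=1$ and $y_{j,\,v^j_r f^j_{r+1}}=1$. I then plan to check that under this assignment the only nonzero entries of $K$ are $K[f^j_r,f^j_{r+1}]=1$: by \cref{eq:symbolic-entries} an entry $K[f,e]$ (with the head of $f$ equal to the tail of $e$, call it $u$) is a sum over colors $i\in[k]$ of $L_i[f,u]\,R_i[u,e]$, and a summand survives only if $i$ is simultaneously the color of a path through $f$ and the color of a path through $e$ — forcing both edges onto the \emph{same} path $P_j$, at positions with $v^j_r=v^j_{r'}=u$, which forces $r=r'$ by simplicity of $P_j$. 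Consequently $K$ is nilpotent (its support is a disjoint union of chains, one per path), $\det(I-K)=1$, and expanding $(I-K)^{-1}=\sum_{q\ge 0}K^q$ in \cref{eq:flow-closed-form} gives $F[\ast,f^j_r]=\sum_i \vec{u}_i\cdot[(I-K)^{-1}][e_i,f^j_r]=\vec{u}_{i_j}$ for every edge $f^j_r$ of $P_j$ (the unique $K$-walk reaching $f^j_r$ originates at $f^j_1=e_{i_j}$). So the $d\times\ell$ matrix with columns $F[\ast,f^1_{L_1}],\dots,F[\ast,f^\ell_{L_\ell}]$ equals the matrix of distinct unit vectors $\vec{u}_{i_1},\dots,\vec{u}_{i_\ell}$, which has rank $\ell$; I fix $\ell$ rows $R^\ast$ on which it is nonsingular.

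To conclude probabilistically, note from \cref{eq:flow-closed-form} that every entry of $F$ is $1/\det(I-Z)$ times an entry of $H_s\,\adj(I-Z)$, a polynomial of degree at most $2(m-1)$ in the indeterminates. Hence $\det F[R^\ast,\set{f^1_{L_1},\dots,f^\ell_{L_\ell}}]=P/\det(I-Z)^\ell$ for a polynomial $P$ of degree at most $2\ell(m-1)\le 2m^2$, and the deterministic assignment above — at which $\det(I-Z)$ evaluates to $1$ and the minor evaluates to a nonzero scalar — certifies that $P\not\equiv 0$. On the event $\det(I-K)\ne 0$ (handled by \Cref{lem:invertible}, failure probability at most $1/m^3$) this minor is nonzero iff $P$ does not vanish at the random point, which by the Schwartz--Zippel Lemma (\Cref{schwartz-zippel}) fails with probability at most $2m^2/p\le 1/m^3$ once $p\ge 2m^5$. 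A union bound then yields $\rank F[\ast,\inedges{t}]\ge\ell=\min(k,\lambda(s,t))$ with probability at least $1-2/m^3$, as claimed.

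The step I expect to be the crux is the sparsity check for $K$ in the second paragraph: one must be sure that routing $\ell$ signals with $\ell$ distinct colors creates no ``cross terms'' in $K=L_1R_1+\dots+L_kR_k$, using both that different paths get different colors and that each individual path is simple. This is exactly where the low-rank construction diverges from the full-rank analysis of \cite{apc} — there every ordered pair of adjacent edges carries its own fresh coefficient, whereas here the coefficient of $\vec{f}$ in $\vec{e}$ is a sum over $k$ colors of products of shared weights, so only $\le k$ independent signals can be propagated at once, which is precisely why the guarantee is $\min(k,\lambda)$ rather than $\lambda$.
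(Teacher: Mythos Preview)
Your proof is correct and follows essentially the same approach as the paper: choose $\ell=\min(k,\lambda(s,t))$ edge-disjoint paths, assign color $j\in[\ell]\subseteq[k]$ to path $P_j$ by setting the corresponding $x$- and $y$-variables to $1$, verify that under this assignment the last edges of the paths receive distinct unit flow vectors, and then upgrade via Schwartz--Zippel applied to the numerator polynomial $P$ of an $\ell\times\ell$ minor. Your exposition is in fact slightly more careful than the paper's --- you explicitly take the paths simple (needed for the ``$f$ immediately follows $e$'' conclusion), spell out the cross-term analysis in $K=\sum_i L_iR_i$, and justify $\det(I-K)=1$ via nilpotence --- but the argument is the same in substance; the only cosmetic difference is that the paper bounds $\deg P$ by $2\lambda m$ rather than your $2\ell(m-1)\le 2m^2$, which makes no difference once $p\ge 2m^5$.
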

    \begin{proof}

    Abbreviate $\lambda = \min(k,\lambda(s,t))$.
        Intuitively, our proof argues that the presence of edge-disjoint paths from $s$ to $t$ leads to certain edges in $\inedges{t}$ being assigned linearly independent flow vectors (with high probability), which then implies the desired lower bound.
        
    By Menger's theorem, $G$ contains $\lambda$ edge-disjoint paths $P_1, \dots, P_\lambda$ from $s$ to $t$.

    Consider the following assignment to the variables of the symbolic matrices $X_i$ and $Y_i$.
    For each index $i\le \lambda$ and edge $e = (u,v)$, we set variable $x_{i,ev} = 1$ if $e$ is an edge in $P_i$.
    Similarly, for each index $i\le \lambda $ and edge $f = (u, v)$, we set variable $y_{i, uf} = 1$ if $f$ is an edge in $P_i$.
    All other variables are set to zero. 
    In particular, if $i > \lambda$, then $X_i$ and $Y_i$ have all their entries set to zero. 
    With respect to this assignment, the matrix
    $X_iY_i$
    (whose rows and columns are indexed by edges in the graph) has the property that $(X_iY_i)[e,f] = 1$ if $f$ is the edge following $e$  on path $P_i$, and all other entries are set to zero.

    Then by \cref{eq:symbolic-transfer}, we see that under this assignment, $Z[e,f] = 1$ if $e$ and $f$ are consecutive edges in some path $P_i$, and all other entries of $Z$ are set to zero.
    For this particular assignment, because the $P_i$ are edge-disjoint paths, \cref{eq:flow-sout,eq:flow-general} imply that the last edge of each path $P_i$ is assigned a distinct $d$-dimensional unit vector.
    These vectors are independent, so, $\rank F[\ast, \inedges{t}] = \lambda$ in this case.
    
    With respect to this assignment, this means that $F[\ast, \inedges{t}]$ contains a $\lambda \times \lambda$ full-rank submatrix.
    Let $F'$ be a submatrix of $F[\ast,\inedges{t}]$ with this property.
        Since $F'$ has full rank, we have $\det F'\neq 0$ for the assignment described above.

    Now, before assigning values to variables, each entry of $\text{adj}(I-Z)$ is a polynomial of degree at most $2m$.
    So by \cref{eq:flow-closed-form}, $\det F'$ is equal to some polynomial $P$ of degree at most $2\lambda m$, divided by $(\det(I-Z))^{\lambda}$.
    We know $P$ is a nonzero polynomial, because we saw above that $\det F'$ is nonzero for some assignment of values to the variables (and if $P$ were the zero polynomial, then $\det F'$ would evaluate to zero under every assignment).

    By \cref{lem:invertible}, with probability at least $1-1/m^3$, a random evaluation to the variables will have $\det(I-Z)$ evaluate to a nonzero value. 
    Assuming this event occurs, by Schwartz-Zippel Lemma (\cref{schwartz-zippel}), a random evaluation to the variables in $Z$ will have $\det F'\neq 0$ with probability at least 
        $1 - (2\lambda m)/p \ge 1-1/m^3$
    by setting $p\ge 2m^5.$

    So by union bound, a particular $\lambda \times \lambda$ submatrix of $F[\ast, \inedges{t}]$ will be full rank with probability at least $1-2/m^3$.
    This proves the desired result. 
    \end{proof}

    \begin{lemma}
    \label{lem:conn-equality}
    Fix vertices $s$ and $t$.
    Define $\lambda = \rank~(I-K)^{-1}[\outedges{s},\inedges{t}].$
    With probability at least $1 - 4/m^3$, we have 
        $\min(k,\lambda) = \min(k,\lambda(s,t)).$
    \end{lemma}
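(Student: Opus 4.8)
The plan is to deduce this lemma directly from \Cref{lem:conn-upper-bound,lem:conn-lower-bound}, after first observing that the quantity $\lambda$ in the statement is exactly the rank of the flow-vector submatrix $F[\ast,\inedges{t}]$ studied in those two lemmas. Indeed, on the event $\det(I-K)\neq 0$ (which holds with high probability by \Cref{lem:invertible}), \cref{eq:flow-closed-form} gives $F = H_s(I-K)^{-1}$. Since $\vec{u}_1,\dots,\vec{u}_d$ are distinct unit vectors of $\mathbb{F}_p^d$, they are precisely the $d$ standard basis vectors, so $H_s$ is zero outside the columns indexed by $\outedges{s}$, and its restriction to those columns is a $d\times d$ permutation matrix. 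Consequently $H_s(I-K)^{-1}$ equals $(I-K)^{-1}[\outedges{s},\ast]$ up to a permutation of its rows, and therefore $\rank F[\ast,\inedges{t}] = \rank (I-K)^{-1}[\outedges{s},\inedges{t}] = \lambda$.

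With this identification in hand, \Cref{lem:conn-upper-bound} gives $\lambda \le \lambda(s,t)$ with probability at least $1-2/m^3$, hence $\min(k,\lambda)\le\min(k,\lambda(s,t))$ on that event; and \Cref{lem:conn-lower-bound} gives $\lambda\ge\min(k,\lambda(s,t))$ with probability at least $1-2/m^3$, so applying $\min(k,\cdot)$ to both sides and using $\min(k,\min(k,x))=\min(k,x)$ yields $\min(k,\lambda)\ge\min(k,\lambda(s,t))$ on that event. Taking a union bound, both events occur simultaneously with probability at least $1-4/m^3$, and on their intersection the two inequalities combine to give $\min(k,\lambda)=\min(k,\lambda(s,t))$, which is the claim.

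There is no genuinely hard step here; the lemma is essentially a repackaging of the two preceding rank bounds together with the structural observation about $H_s$. The only point deserving care is the probability bookkeeping: the event $\det(I-K)\neq 0$ is common to the proofs of \Cref{lem:conn-upper-bound} and \Cref{lem:conn-lower-bound}, so the naive union bound double-counts it — this only helps, and the stated bound $1-4/m^3$ holds regardless. When $\det(I-K)=0$ the matrix $(I-K)^{-1}$ and hence $\lambda$ are undefined; following the convention in \Cref{sec:prelim}, this case is absorbed into the failure probability already accounted for in the $4/m^3$ term.
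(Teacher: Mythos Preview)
Your proposal is correct and follows essentially the same approach as the paper: identify $\lambda$ with $\rank F[\ast,\inedges{t}]$ via the structure of $H_s$ and \cref{eq:flow-closed-form}, then combine \Cref{lem:conn-upper-bound,lem:conn-lower-bound} by a union bound. Your write-up is in fact a bit more explicit than the paper's (spelling out the permutation-matrix observation and the handling of the $\det(I-K)=0$ case), but the argument is the same.
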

    \begin{proof}
        The definition of $H_s$ together with 
        \cref{eq:flow-closed-form} implies that 
            \begin{equation}
    \label{eq:flow-submatrix}
    F[\ast, \inedges{t}] = (I-K)^{-1}[\outedges{s},\inedges{t}].
    \end{equation}
    By union bound over \cref{lem:conn-lower-bound,lem:conn-upper-bound}, with probability at least $1 - 4/m^3$ the inequalities
        \[\lambda = \rank~(I-K)^{-1}[\outedges{s},\inedges{t}] = \rank F[\ast,\inedges{t}] \le \lambda(s,t)\]
    and 
        \[\lambda = \rank~(I-K)^{-1}[\outedges{s},\inedges{t}] = \rank F[\ast,\inedges{t}] \ge \min(k,\lambda(s,t)) \]
        simultaneously hold.
        The desired result follows.
    \end{proof}



\section{Connectivity Algorithm} 
\label{sec:k-APC}

In this section, we present our algorithm for \textsf{$k$-APC}.

\paragraph*{Graph Transformation}

We begin by modifying the input graph $G$ as follows.
For every vertex $v$ in $G$, we introduce two new nodes $v_{\t{out}}$ and $v_{\t{in}}$.
We replace each edge $(u,v)$ originally in $G$ is by the edge $(u_{\t{out}},v_{\t{in}})$.
We add $k$ parallel edges from $v$ to $v_{\t{out}}$, and $k$ parallel edges from $v_{\t{in}}$ to $v$, for all $u$ and $v$.
We call vertices present in the graph before modification the \emph{original vertices}.

Suppose $G$ originally had $n$ nodes and $m$ edges.
Then the modified graph has $\nnew = 3n$ nodes and $\mnew = m+2kn$ edges.
For any original vertices $s$ and $t$, edge-disjoint paths from $s$ to $t$ in the new graph correspond to edge disjoint paths from $s$ to $t$ in the original graph.
Moreover, for any integer $\ell\le k$, if the original graph contained $\ell$ edge-disjoint paths from $s$ to $t$, then the new graph contains $\ell$ edge-disjoint paths from $s$ to $t$ as well.

Thus, for any original vertices $s$ and $t$, the value of $\min(k, \lambda(s,t))$ remains the same in the old graph and the new graph. 
So, it suffices to solve \textsf{$k$-APC} on the new graph.
In this new graph, the indegrees and outdegrees of every original vertex are equal to $k$.
Moreover, sets $\outedges{s}$ and $\inedges{t}$ are pairwise disjoint, over all original vertices $s$ and $t$.

\paragraph*{Additional Definitions}

We make use of the matrices defined in \Cref{sec:flow}, except now these matrices are defined with respect to the modified graph.
In particular, $K$, $L$, and $R$ are now matrices with dimensions $\mnew\times \mnew$, $\mnew\times \nnew$, and $\nnew\times \mnew$ respectively.

Define $\tilde{L}$ to be the $kn\times \nnew$ matrix obtained by vertically concatenating $L[\outedges{s},\ast]$ over all original vertices $s$.
Similarly, define $\tilde{R}$ to be the $\nnew\times kn$ matrix obtained by horizontally concatenating $R[\ast, \inedges{t}]$ over all original vertices $t$. 

\paragraph*{The Algorithm}

Using the above definitions, we present our  approach for solving \textsf{$k$-APC} in \Cref{alg:k-APC}.

\begin{algorithm}[ht]
\caption{Our algorithm for solving \textsf{$k$-APC}.}
\label{alg:k-APC}
    \begin{algorithmic}[1] 
       
        \State 
        Compute the $\nnew\times \nnew$ matrix $(I - RL)^{-1}$.
        \label{step:encode-invert}

        \State
        Compute the $k\nnew\times k\nnew$ matrix $M = \tilde{L}(I-RL)^{-1}\tilde{R}$.
        \label{step:encode-multiply}

        \State
        For each pair $(s,t)$ of original vertices, compute
            \[\rank M[\outedges{s},\inedges{t}]\]
        and output this as the value for $\min(k,\lambda(s,t))$.
        \label{step:decode}
    \end{algorithmic}
\end{algorithm}

\begin{theorem}
\label{thm:correct}
With probability at least $1-5/(\mnew)$, \Cref{alg:k-APC} correctly solves \textsf{$k$-APC}.
\end{theorem}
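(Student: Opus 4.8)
The plan is to show that the matrix $M$ built by \Cref{alg:k-APC} satisfies $M[\outedges{s},\inedges{t}] = (I-K)^{-1}[\outedges{s},\inedges{t}]$ for every pair $(s,t)$ of original vertices, and then to read off correctness from \Cref{lem:conn-equality} (applied to the modified graph, so with $m$ replaced by $\mnew$) via a union bound. First I would check that Step~\ref{step:encode-invert} is well-defined, that is, that $I-RL$ is invertible with high probability. Since $K = LR$, the Weinstein--Aronszajn identity gives $\det(I-RL) = \det(I-LR) = \det(I-K)$ as polynomials in the random entries; alternatively one can rerun the degree-counting argument of \Cref{lem:invertible} on the $kn\times kn$ symbolic matrix $YX$, whose diagonal vanishes because $G$ has no self-loops. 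Either way, the event ``$I-RL$ singular'' coincides with ``$I-K$ singular'', which by \Cref{lem:invertible} has probability at most $1/\mnew^3$.

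Assuming $\det(I-K)\neq 0$, \Cref{lem:inverse} applied with $A = L$ and $B = R$ yields
\[
(I-K)^{-1} = (I-LR)^{-1} = I + L(I-RL)^{-1}R .
\]
Restricting both sides to the rows indexed by $\outedges{s}$ and the columns indexed by $\inedges{t}$ gives
\[
(I-K)^{-1}[\outedges{s},\inedges{t}] = I[\outedges{s},\inedges{t}] + L[\outedges{s},\ast]\,(I-RL)^{-1}\,R[\ast,\inedges{t}] .
\]
This is the one place the graph transformation is essential: in the modified graph the sets $\outedges{s}$ and $\inedges{t}$ are pairwise disjoint over all original $s,t$, so $I[\outedges{s},\inedges{t}] = \vec{0}$, and the first term drops out. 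Since $L[\outedges{s},\ast]$ is exactly the block of $\tilde{L}$ contributed by $s$ and $R[\ast,\inedges{t}]$ is exactly the block of $\tilde{R}$ contributed by $t$, the surviving product is precisely the submatrix $M[\outedges{s},\inedges{t}]$ of $M = \tilde{L}(I-RL)^{-1}\tilde{R}$. Hence $M[\outedges{s},\inedges{t}] = (I-K)^{-1}[\outedges{s},\inedges{t}]$.

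Now fix a pair $(s,t)$ of original vertices. In the modified graph each original vertex has in- and out-degree exactly $k$, so $M[\outedges{s},\inedges{t}]$ is a $k\times k$ matrix and $\rank M[\outedges{s},\inedges{t}]\le k$. Combining this with the identity just proved and \Cref{lem:conn-equality} (for the modified graph), with probability at least $1-4/\mnew^3$ we get
\[
\rank M[\outedges{s},\inedges{t}] = \min\!\big(k,\, \rank (I-K)^{-1}[\outedges{s},\inedges{t}]\big) = \min(k,\lambda_{\mathrm{new}}(s,t)),
\]
where $\lambda_{\mathrm{new}}$ is connectivity in the modified graph; by the properties of the transformation recorded in \Cref{sec:k-APC}, $\min(k,\lambda_{\mathrm{new}}(s,t)) = \min(k,\lambda(s,t))$. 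So Step~\ref{step:decode} outputs the right value for this pair whenever this event holds.

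Finally I would union bound over the at most $n^2 \le \mnew^2$ pairs of original vertices (the invertibility of $I-RL$, equivalently of $I-K$, is already subsumed in the failure events of \Cref{lem:conn-equality}), giving total failure probability at most $4n^2/\mnew^3 \le 4/\mnew \le 5/\mnew$, as claimed. The only genuinely non-routine step here is the second paragraph: recognizing that \Cref{alg:k-APC}'s matrix $M$ captures the submatrices of $(I-K)^{-1}$ used in \Cref{sec:flow}, and in particular that the stray identity term produced by \Cref{lem:inverse} disappears exactly because the graph transformation forces $\outedges{s}$ and $\inedges{t}$ to be disjoint. Everything else is bookkeeping layered on top of the lemmas of \Cref{sec:flow}.
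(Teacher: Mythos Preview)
Your proof is correct and follows essentially the same route as the paper's: invoke \Cref{lem:inverse} to expand $(I-K)^{-1}$, use the disjointness of $\outedges{s}$ and $\inedges{t}$ in the modified graph to kill the identity term, identify the resulting block with $M[\outedges{s},\inedges{t}]$, and finish with \Cref{lem:conn-equality} and a union bound. If anything you are slightly more careful than the paper---you explicitly justify the invertibility of $I-RL$ via the Sylvester/Weinstein--Aronszajn identity, and you make explicit the step $\rank \le k$ needed to convert the conclusion of \Cref{lem:conn-equality} into an exact rank equality---but the argument is the same.
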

\begin{proof}
By \Cref{lem:invertible}, with probability at least $1-1/(\mnew)^4$ the matrix $I-K$ is invertible.
Going forward, we assume that $I-K$ is invertible.



By \Cref{lem:conn-equality}, with probability at least $1 - 4/(\mnew)^3$, we have 
    \begin{equation}
    \label{eq:bounded-flow-exact}
     \rank (I-K)^{-1}[\outedges{s}, \inedges{t}] = \min(k,\lambda(s,t))
     \end{equation}
   for any given original vertices $s$ and $t$. 
By union bound over all $n^2\le (\mnew)^2$ pairs  of original vertices $(s,t)$, we see that \Cref{eq:bounded-flow-exact} holds for all original vertices $s$ and $t$ with probability at least $ 1- 4/(\mnew)$.

Since $I-K$ is invertible, by \Cref{claim:low-rank} and \Cref{lem:inverse} we have
    \[(I-K)^{-1} = (I-LR)^{-1} = I + L(I-RL)^{-1}R.\]
Using the above equation in \cref{eq:bounded-flow-exact} shows that for original vertices $s$ and $t$, the quantity $\min(k,\lambda(s,t))$ is equal to the rank of 
    \[ (I + L(I-RL)^{-1}R)[\outedges{s}, \inedges{t}]  = L[\outedges{s}, \ast](I-RL)^{-1}R[\ast, \inedges{t}]\]
where we use the fact that $I[\outedges{s}, \inedges{t}]$ is the all zeroes matrix, since in the modified graph, $\outedges{s}$ and $\inedges{t}$ are disjoint sets for all pairs of original vertices $(s,t)$. 

Then by definition of $\tilde{L}$ and $\tilde{R}$, the above equation and discussion imply that 
\[\min(k,\lambda(s,t)) = \rank~(\tilde{L}(I-RL)^{-1}\tilde{R})[\outedges{s}, \inedges{t}] = \rank M[\outedges{s}, \inedges{t}]\]
which proves that \Cref{alg:k-APC} outputs the correct answers.

A union bound over the events that $I-K$ is invertible and that \cref{eq:bounded-flow-exact} holds for all $(s,t)$, shows that \Cref{alg:k-APC} is correct with probability at least $1 - 5/(\mnew)$.
\end{proof}

We are now ready to prove our main result.

\flowthm*
\begin{proof}

By \Cref{thm:correct},  \Cref{alg:k-APC} correctly solves the \textsf{$k$-APC} problem.
We now argue that \Cref{alg:k-APC} can be implemented to run in $\tilde{O}((kn)^\omega)$ time.

In step \ref{step:encode-invert} of \Cref{alg:k-APC}, we need to compute $(I-RL)^{-1}$.

From the definitions of $R$ and $L$, we see that to compute $RL$, it suffices to compute the products $R_iL_j$ for each pair of indices $(i,j)\in [k]^2$.
The matrix $R_iL_j$ is $\nnew\times \nnew$, and its rows and columns are indexed by vertices in the graph.
Given vertices $u$ and $v$, let $E(u,v)$ denote the set of parallel edges from $u$ to $v$. 
From the definitions of the $R_i$ and $L_j$ matrices, we see that for any vertices $u$ and $v$, we have 
    \begin{equation}
    \label{eq:parallel-sum}
    (R_iL_j)[u,v] = \sum_{e\in E(u,v)}R_i[u,e]L_j[e,v].
    \end{equation}
 As noted in \cref{sec:prelim}, for all vertices $u$ and $v$ we may assume that $|E(u,v)| \le k$.
 
For each vertex $u$, define the $k\times \outdeg(u)$ matrix $R'_{u}$, with rows indexed by $[k]$ and columns indexed by edges exiting $u$, by setting 
    \[R'_u[i,e] = R_i[u,e]\]
for all $i\in [k]$ and $e\in \outedges{u}$.

Similarly, for 
each vertex $v$, define the  $\indeg(v)\times k$ matrix $L'_v$ by setting
    \[L'_v[e,j] = L_j[e,v]\]
for all $e\in \inedges{v}$ and $j\in [k]$.

Finally, for each pair $(u,v)$ of vertices, define $R'_{uv} = R'_u[\ast,E(u,v)]$ and $L'_{uv} = L'_v[E(u,v),\ast]$.
Then by \cref{eq:parallel-sum}, we have 
    \[(R_i L_j)[u,v] = R'_{uv} L'_{uv}[i,j].\]
Thus, to compute the $R_i L_j$ products, it suffices to build the $R'_{u}$ and $L'_v$ matrices in $O\grp{k\mnew}$ time, and then compute the $R'_{uv} L'_{uv}$ products.
We can do this by computing $\grp{\nnew}^2$  products of pairs of $k\times k$ matrices.
Since for every pair of vertices $(u,v)$, there are at most $k$ parallel edges from $u$ to $v$, $k\mnew \le k^2n^2$, we can compute all the $R_i L_j$ products, and hence the entire matrix $RL$, in $\tilde{O}(n^2k^\omega)$ time.

We can then compute $I-RL$ by modifying $O(kn)$ entries of $RL$.
Finally, by \Cref{lem:compute-inverse} we can compute $(I-RL)^{-1}$ in $\tilde{O}((kn)^\omega)$ time.

So overall, step \ref{step:encode-invert} of \Cref{alg:k-APC} takes $\tilde{O}((kn)^\omega)$ time.

In step \ref{step:encode-multiply} of \Cref{alg:k-APC}, we need to compute $M = \tilde{L}(I-RL)^{-1}\tilde{R}$.

Recall that $\tilde{L}$ is a $kn\times \nnew$ matrix.
By definition, each row of $\tilde{L}$ has a single nonzero entry.
Similarly, $\tilde{R}$ is an $\nnew \times kn$ matrix, with a single nonzero entry in each column.

Thus we can compute $M$, and complete step \ref{step:encode-multiply}  of \Cref{alg:k-APC} in $\tilde{O}((kn)^2)$ time.

Finally, in step \ref{step:decode} of \Cref{alg:k-APC}, we need to compute 
        \begin{equation}
        \label{rankers}
        \rank M[\outedges{s}, \inedges{t}]
        \end{equation}
for each pair of original vertices $(s,t)$ in the graph.
In the modified graph, each original vertex has indegree and outdegree $k$, so each $M[\outedges{s}, \inedges{t}]$ is a $k\times k$ matrix.
For any fixed $(s,t)$, by \Cref{lem:rank} we can compute the rank of $M[\outedges{s}, \inedges{t}]$  in $\tilde{O}(k^\omega)$ time.

So we can compute the ranks from \Cref{rankers} for all $n^2$ pairs of original vertices $(s,t)$ and complete step \ref{step:decode} of \Cref{alg:k-APC} in $\tilde{O}(k^\omega n^2)$ time. 

Thus we can solve \textsf{$k$-APC} in $\tilde{O}((kn)^\omega)$ time overall, as claimed.
\end{proof}

\section{Encoding Vertex Connectivities} 
\label{sec:flow-vertex}

Take a prime $p = \tilde{\Theta}(n^5)$.
Let $K$ be an $n\times n$ matrix, whose rows and columns are indexed by vertices of $G$.
For each pair $(u,v)$ of vertices, if $(u,v)$ is an edge in $G$, we set $K[u,v]$ to be a uniform random element of $\mathbb{F}_p$.
Otherwise, $K[u,v] = 0$.

Recall from \Cref{sec:prelim} that given a vertex $v$ in $G$, we let  $\innodescl{v} = \innodes{v}\cup\set{v}$ be the set consisting of $v$ and all in-neighbors of $v$, and $\outnodescl{v} = \outnodes{v}\cup\set{v}$ be the set consisting of $v$ and all out-neighbors of $v$.
The following proposition\footnote{The result stated here differs from a similar claim used in \cite[Section 5]{ap-bounded-mincut}. We discuss this difference, and related subtleties, in \Cref{correctingtheliterature}.} is based off ideas from \cite[Section 2]{apc} and \cite[Section 5]{ap-bounded-mincut}.
We present a complete proof of this result in \Cref{anewproof}.

\begin{restatable}{proposition}{subtleay}
    \label{node-prop:original-vc}
        For any vertices $s$ and $t$ in $G$, with probability at least $1-3/n^3$, the matrix $(I-K)$ is invertible and we have 
        \[\rank~(I-K)^{-1}[\outnodescl{s},\innodescl{t}] = \begin{cases} \nu(s,t) + 1 & \text{if }(s,t)\text{ is an edge} \\ \nu(s,t) & \text{otherwise.}\end{cases}\]
\end{restatable}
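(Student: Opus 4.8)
The plan is to prove this by setting up a vertex analogue of the flow-vector encoding of \Cref{sec:flow}, following \cite[Section 2]{apc} and \cite[Section 5]{ap-bounded-mincut}. Fix $s$ and $t$, and let $H_s$ be the $\abs{\outnodescl{s}}\times n$ matrix consisting of the rows of the $n\times n$ identity matrix indexed by $\outnodescl{s}$. First I would argue $\det(I-K)\neq 0$: since $G$ has no self-loops the diagonal of $K$ vanishes, so $\det(I-K)$ is a polynomial in the random entries of $K$ with constant term $1$ and degree at most $n$, hence nonzero, and \Cref{schwartz-zippel} gives invertibility with probability at least $1-n/p$. Then $F := H_s(I-K)^{-1}$ is well defined, satisfies $F = FK + H_s$, and has $F[\ast,\innodescl{t}] = (I-K)^{-1}[\outnodescl{s},\innodescl{t}]$. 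Reading $F = FK+H_s$ one column at a time shows each column $\vec v = F[\ast,v]$ behaves like a flow vector: $\vec v$ is a random $\mathbb{F}_p$-linear combination of the vectors $\vec u$ over in-neighbors $u$ of $v$, plus an extra unit vector $\vec e_v$ precisely when $v\in\outnodescl{s}$. It then remains to show $\rank F[\ast,\innodescl{t}]$ equals $\nu(s,t)$ when $(s,t)$ is not an edge, and $\nu(s,t)+1$ when it is.

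For the upper bound I would use a minimum vertex cut. Assume first $(s,t)$ is not an edge, and let $C$ be a minimum $(s,t)$-vertex cut, so $\abs C=\nu(s,t)$ by Menger's theorem. Let $D$ be the set of vertices $v\notin C$ such that \emph{every} path from $s$ to $v$ meets $C$. Elementary path arguments show $D\cap\outnodescl{s}=\emptyset$, that every in-neighbor of a vertex of $D$ lies in $C\cup D$, and that $\innodescl{t}\sub C\cup D$ (e.g.\ an in-neighbor $u\notin C$ of $t$ with an $s$-to-$u$ path avoiding $C$ would give an $s$-to-$t$ path avoiding $C$). Restricting $F = FK+H_s$ to the columns indexed by $D$ — where the $H_s$-term vanishes and all in-neighbors stay inside $C\cup D$ — gives $F[\ast,D]\grp{I-K[D,D]} = F[\ast,C]\,K[C,D]$; since $I-K[D,D]$ is invertible with high probability (again by \Cref{schwartz-zippel}), every column of $F[\ast,D]$ lies in the column span of $F[\ast,C]$, so $\rank F[\ast,\innodescl{t}]\le\rank F[\ast,C\cup D]\le\abs C=\nu(s,t)$. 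When $(s,t)$ \emph{is} an edge there is no vertex cut, so I would instead apply this reasoning to $G$ with the edge $(s,t)$ deleted: there it has a minimum $(s,t)$-vertex cut $C$ with $\abs C=\nu(s,t)-1$, and setting $\hat C=C\cup\set t$ and letting $D$ be the vertices $v\notin\hat C$ all of whose $s$-to-$v$ paths in $G$ meet $\hat C$, the same steps give $\rank F[\ast,\hat C\cup D]\le\abs{\hat C}=\nu(s,t)$. Now the only in-neighbor of $t$ that can escape $\hat C\cup D$ is $s$ itself (any $s$-to-$u$ path avoiding $\hat C$ with $u\neq s$ avoids $t$, hence does not use the edge $(s,t)$), so $\innodescl{t}\sub\hat C\cup D\cup\set s$ and $\rank F[\ast,\innodescl{t}]\le\nu(s,t)+1$.

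For the lower bound, exactly as in the proof of \Cref{lem:conn-lower-bound}, it suffices to exhibit one assignment of the random variables under which $\rank F[\ast,\innodescl{t}]$ already reaches the claimed value: writing any $r\times r$ minor of $F$ as a polynomial over a power of $\det(I-K)$, if this minor is nonzero at our chosen assignment then it is nonzero as a polynomial, so \Cref{schwartz-zippel} makes it nonzero at a random assignment with high probability. By Menger's theorem pick $\nu(s,t)$ internally vertex-disjoint $s$-to-$t$ paths; when $(s,t)$ is an edge one of them is the edge itself, and I keep the other $\nu(s,t)-1$ paths (each with at least one internal vertex, since $G$ is simple). Assign $K[u,v]=1$ for every edge $(u,v)$ lying on one of these paths and not leaving $s$, and $K[u,v]=0$ otherwise; the resulting $K$ is nilpotent (supported on a union of vertex-disjoint simple paths), so $I-K$ is invertible and the flow vectors are obtained by direct propagation. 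One checks that the last internal vertex $b_j$ of the $j$-th path receives a flow vector equal to a nonzero sum of distinct unit vectors supported among the internal vertices of that path, so the $b_j$ get flow vectors with pairwise disjoint supports, giving $\nu(s,t)$ independent columns of $F[\ast,\innodescl{t}]$ in the non-edge case. In the edge case $\vec t = \vec e_t + \sum_j \vec b_j$ and $\vec s = \vec e_s$ are two further columns of $F[\ast,\innodescl{t}]$ (here $s,t\in\innodescl{t}$ since $(s,t)$ is an edge), and their $\vec e_t$ and $\vec e_s$ coordinates are untouched by the $\vec b_j$, so together with the $\vec b_j$ they form $\nu(s,t)+1$ independent columns.

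Finally I would combine the invertibility claim, the upper bound, and the lower bound by a union bound over the constantly many invocations of \Cref{schwartz-zippel} (the polynomials involved have degree $\poly(n)$, so taking $p=\tilde\Theta(n^5)$ large enough suffices), obtaining the stated probability $1-3/n^3$. I expect the upper-bound step to be the main obstacle: unlike the edge-version argument of \Cref{lem:conn-upper-bound}, the cut vertices here are not forced to carry ``free'' flow vectors — they can absorb flow coming from outside the downstream region — so one must carefully isolate the sub-system on $D$ that genuinely obeys a clean linear recurrence, and the case where $(s,t)$ is an edge, which admits no vertex cut at all, has to be handled separately via the detour through $G$ minus the edge $(s,t)$, together with the bookkeeping that produces the extra $+1$ in the rank.
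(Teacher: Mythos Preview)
Your proposal is correct and follows essentially the same approach as the paper: invertibility of $I-K$ via \Cref{schwartz-zippel}, an upper bound obtained by restricting the flow equation to the region downstream of a minimum vertex cut (handling the edge case by first deleting $(s,t)$), and a lower bound obtained by exhibiting a specific assignment along a maximum family of internally vertex-disjoint $s$--$t$ paths and then invoking \Cref{schwartz-zippel}. The only cosmetic differences are that the paper, in the edge case, enlarges the cut by both $s$ and $t$ (rather than just $t$, with $s$ handled separately as you do), and that its witness assignment sets \emph{all} path-edge variables to $1$ (including those leaving $s$), yielding $\vec b_j = \vec u_0 + \vec u_j$ instead of your disjoint-support vectors; both choices work equally well.
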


\Cref{node-prop:original-vc} shows that we can compute vertex connectivities in $G$ simply by computing ranks of certain submatrices of $(I-K)^{-1}$.
However, these submatrices could potentially be quite large, which is bad if we want to compute the vertex connectivities quickly.
To overcome this issue, we show how to decrease the size of $(I-K)^{-1}$ while still preserving relevant information about the value of $\nu(s,t)$.

\begin{lemma}
    \label{lem:conditioning}
    Let $M$ be an $a\times b$ matrix over $\mathbb{F}_p$.
    Let $\Gamma$ be a $(k+1)\times a$ matrix with uniform random entries from $\mathbb{F}_p$.
    Then with probability at least $1 - (k+1)/p$, we have 
        \[\rank \Gamma M = \min(k+1, \rank M).\]
\end{lemma}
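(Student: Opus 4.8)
The plan is to establish the two inequalities $\rank \Gamma M \le \min(k+1, \rank M)$ and $\rank \Gamma M \ge \min(k+1, \rank M)$ separately. The first is purely deterministic: since $\Gamma$ has only $k+1$ rows, $\rank \Gamma M \le \min(\rank \Gamma, \rank M) \le \min(k+1, \rank M)$, with no randomness needed. So all the real work — and the entire failure probability $(k+1)/p$ — lies in the matching lower bound.

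For the lower bound, write $r = \rank M$ and $s = \min(k+1, r)$. The first step I would carry out is a reduction to the full-column-rank case: because $\rank M = r \ge s$, there is a set $T$ of exactly $s$ columns of $M$ with $\rank M[\ast, T] = s$, and since $\Gamma\grp{M[\ast, T]} = (\Gamma M)[\ast, T]$ is a column-submatrix of $\Gamma M$ we have $\rank \Gamma M \ge \rank \Gamma\grp{M[\ast,T]}$. Hence it suffices to prove that $\rank \Gamma\grp{M[\ast,T]} = s$ with probability at least $1 - (k+1)/p$, and we may as well assume from the outset that $M$ is an $a \times s$ matrix of full column rank $s$, where $s \le k+1$.

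The second step is a Schwartz–Zippel argument. Treating the $(k+1)a$ entries of $\Gamma$ as formal indeterminates, every entry of $\Gamma M$ is a linear form in them, so the $s\times s$ minor $D$ of $\Gamma M$ formed by its first $s$ rows is a polynomial of degree at most $s \le k+1$. I claim $D$ is not identically zero: since $M$ has full column rank $s$ it has a left inverse $N$ (an $s\times a$ matrix with $NM = I_s$), and setting the first $s$ rows of $\Gamma$ equal to the rows of $N$ and the remaining rows to zero makes the top $s\times s$ block of $\Gamma M$ equal to $I_s$, so $D$ evaluates to $1$ under this assignment. Then \Cref{schwartz-zippel} shows that a uniform random choice of the entries of $\Gamma$ makes $D \neq 0$ with probability at least $1 - (k+1)/p$; whenever this happens the first $s$ rows of $\Gamma M$ are linearly independent, so $\rank \Gamma M \ge s$, and combined with the deterministic upper bound we conclude $\rank \Gamma M = s = \min(k+1, \rank M)$.

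The argument is short and I do not expect a genuine obstacle. The one place to be careful is the reduction to the full-column-rank case — making sure the chosen column-submatrix really witnesses a lower bound on $\rank \Gamma M$ — together with the accompanying observation that the relevant minor has degree at most $s \le k+1$, which is exactly what forces the probability bound to come out as $1-(k+1)/p$ rather than something weaker.
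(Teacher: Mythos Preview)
Your proposal is correct and follows essentially the same approach as the paper's proof: both establish the deterministic upper bound, isolate an $s\times s$ minor of $\Gamma M$ (with $s=\min(k+1,\rank M)$) by restricting to suitable columns of $M$ and rows of $\Gamma$, exhibit a specific assignment to the entries of $\Gamma$ making that minor nonzero, and finish with Schwartz--Zippel. The only cosmetic difference is the witness assignment: the paper picks a full-rank $s\times s$ submatrix $M[S,T]$ and sets the relevant rows of $\Gamma$ to indicator vectors for $S$, whereas you use a left inverse of $M[\ast,T]$; both yield a nonzero evaluation of the same degree-$s$ polynomial.
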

\begin{proof}
    Since $\Gamma M$ has $k+1$ rows, $\rank (\Gamma M) \le k+1$.
    
    Similarly, since $\Gamma M$ has $M$ as a factor, $\rank (\Gamma M) \le \rank M$.
    Thus 
        \begin{equation}
        \label{eq:trivial}
        \rank \Gamma M \le \min(k+1, \rank M).
        \end{equation}
    So, it suffices to show that $\Gamma M$ has rank at least $\min(k+1,\rank M)$.

    Set $r = \min(k+1,\rank M)$.
    Then there exist subsets $S$ and $T$ of row and column indices respectively, such that $|S| = |T| = r$ and $M[S,T]$ has rank $r$.
    Now, let $U$ be an arbitrary set of $r$ rows in $\Gamma$.
    Consider the matrix $M' = (\Gamma M)[U, T]$.

    We can view each entry of $M'$ as a polynomial of degree at most $1$ in the entries of $\Gamma$.
    This means that $\det M'$ is a polynomial of degree at most $r$ in the entries of $\Gamma$.
    Moreover, if the submatrix $\Gamma[U,T] = I$ happens to be the identity matrix, then $M' = M[S,T]$.
    This implies that $\det M'$ is a nonzero polynomial in the entries of $\Gamma$, because for some assignment of values to the entries of $\Gamma$, this polynomial has nonzero evaluation $\det M[S,T]\neq 0$ (where we are using the fact that $M[S,T]$ has full rank).

    So by the Schwartz-Zippel Lemma (\Cref{schwartz-zippel}), the matrix $\Gamma M$ has rank at least $r$, with probability at least $1-r/p$.
    
    Together with \Cref{eq:trivial}, this implies the desired result.    
\end{proof}

Now, to each vertex $u$ in the graph, we assign a $(k+1)$-dimensional column vector $\vec{b}_u$ and a $(k+1)$-dimensional row vector $\vec{c}_u$.

Let $B$ be the $(k+1)\times n$ matrix formed by concatenating all of the $\vec{b}_u$ vectors horizontally, and let $C$ be the $n\times (k+1)$ matrix formed by concatenating all of the $\vec{c}_u$ vectors vertically.
For each pair of distinct vertices $(s,t)$, define the $(k+1)\times (k+1)$ matrix 
    \begin{equation}
        \label{eq:Mst-def}
        M_{s,t} = B[\ast,\outnodescl{s}]\grp{(I-K)^{-1}[\outnodescl{s},\innodescl{t}]}
        C[\innodescl{t}, \ast].
    \end{equation}
The following result is the basis of our algorithm for \textsf{$k$-APVC}.

\begin{lemma}
    \label{lem:vertex-conn-kbyk}
    For any vertices $s$ and $t$ in $G$, with probability at least $1 - 5/n^3$, we have
        \[\rank M_{s,t} = \begin{cases}\min(k+1, \nu(s,t)+1) & \text{if }(s,t)\text{ is an edge} \\ \min(k+1,\nu(s,t)) & \text{otherwise.}\end{cases}\]
\end{lemma}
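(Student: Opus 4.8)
The plan is to combine \Cref{node-prop:original-vc} with two applications of \Cref{lem:conditioning}, using crucially that the random matrices $K$, $B$, and $C$ are mutually independent. Abbreviate $N = (I-K)^{-1}[\outnodescl{s},\innodescl{t}]$, so that by \cref{eq:Mst-def} we have $M_{s,t} = B[\ast,\outnodescl{s}]\, N\, C[\innodescl{t},\ast]$. First I would condition on the event guaranteed by \Cref{node-prop:original-vc}, which holds with probability at least $1-3/n^3$: that $(I-K)$ is invertible and $\rank N$ equals $\nu(s,t)+1$ if $(s,t)$ is an edge and equals $\nu(s,t)$ otherwise. Under this conditioning $N$ is a fixed matrix, so it now suffices to show that $\rank M_{s,t} = \min(k+1,\rank N)$ with probability at least $1-2/n^3$ over the remaining randomness in $B$ and $C$; the claimed case analysis then follows by substituting the value of $\rank N$.

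For the left factor, $B[\ast,\outnodescl{s}]$ is a $(k+1)\times|\outnodescl{s}|$ matrix whose entries are independent uniform elements of $\mathbb{F}_p$, and it is independent of $N$. Applying \Cref{lem:conditioning} with preconditioner $B[\ast,\outnodescl{s}]$ and base matrix $N$ gives $\rank\grp{B[\ast,\outnodescl{s}]\, N} = \min(k+1,\rank N)$ with probability at least $1-(k+1)/p$. For the right factor I would pass to transposes: $M_{s,t}^\top = C[\innodescl{t},\ast]^\top\grp{B[\ast,\outnodescl{s}]\, N}^\top$, where $C[\innodescl{t},\ast]^\top$ is a $(k+1)\times|\innodescl{t}|$ matrix of independent uniform entries, independent of everything conditioned on so far. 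Conditioning additionally on $B$ and applying \Cref{lem:conditioning} a second time yields, with probability at least $1-(k+1)/p$, that
\[
\rank M_{s,t} = \rank M_{s,t}^\top = \min\grp{k+1,\ \rank\grp{B[\ast,\outnodescl{s}]\, N}} = \min(k+1,\rank N),
\]
where the last equality uses that $\min(k+1,\min(k+1,x)) = \min(k+1,x)$.

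It remains to collect the probabilities. Taking a union bound over the failure of \Cref{node-prop:original-vc} and the two failures of \Cref{lem:conditioning}, the total failure probability is at most $3/n^3 + 2(k+1)/p$. Since we may assume $k < n$ without loss of generality (vertex connectivities never exceed $n$, so capping $k$ at $n$ changes no output), and $p = \tilde{\Theta}(n^5)$, each term $(k+1)/p$ is at most $1/n^3$, so the total is at most $5/n^3$, as claimed.

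The only real subtlety, and the point I would take care with, is the bookkeeping of the randomness: \Cref{lem:conditioning} requires its base matrix to be fixed while the preconditioner is random, so one must not treat $M_{s,t}$ as a single random object but instead reveal $K$, then $B$, then $C$ in that order, invoking their mutual independence so that at each stage the matrix being multiplied is deterministic. Everything else is a routine chaining of the two cited results (with Menger's theorem already packaged inside \Cref{node-prop:original-vc}).
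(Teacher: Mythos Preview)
Your proposal is correct and follows essentially the same approach as the paper's proof: apply \Cref{node-prop:original-vc}, then \Cref{lem:conditioning} once for the left factor $B[\ast,\outnodescl{s}]$ and once (via transposition) for the right factor $C[\innodescl{t},\ast]$, and finish with a union bound. Your treatment is in fact slightly more careful than the paper's, in that you make explicit the independence/conditioning argument needed to invoke \Cref{lem:conditioning} on a fixed base matrix, and you spell out why $(k+1)/p \le 1/n^3$.
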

\begin{proof}
Fix vertices $s$ and $t$. Then, by \Cref{node-prop:original-vc}, we have 
        \[\rank~(I-K)^{-1}[\outnodescl{s},\innodescl{t}] = \begin{cases}\nu(s,t)+1 & \text{if }(s,t)\text{ is an edge} \\ \nu(s,t) & \text{otherwise}\end{cases}\]
    with probability at least $1-3/n^3$.
    Assume the above equation holds.

    Then, by setting $\Gamma = B[\ast,\outnodescl{s}]$ and $M = (I-K)^{-1}[\outnodescl{s},\innodescl{t}]$ in \Cref{lem:conditioning}, we see that with probability at least $1-1/n^3$ we have 
        \[\rank B[\ast,\outnodescl{s}](I-K)^{-1}[\outnodescl{s},\innodes{t}] = \begin{cases}\min(k+1, \nu(s,t)+1) & \text{if }(s,t)\text{ is an edge} \\ \min(k+1,\nu(s,t)) & \text{otherwise.}\end{cases}.\]
    Assume the above equation holds.

    Finally, by setting $\Gamma = C^{\top}[\ast,\innodes{t}]$  and $M = ( B[\ast,\outnodescl{s}](I-K)^{-1}[\outnodescl{s},\innodes{t}])^{\top}$ in \Cref{lem:conditioning} and transposition, we see that with probability at least $1-1/n^3$ we have 
        \[\rank B[\ast,\outnodescl{s}]\grp{(I-K)^{-1}[\outnodescl{s},\innodes{t}]}C[\innodes{t},\ast] = \min(k+1, \nu(s,t)+1)\]
    if there is an edge from $s$ to $t$, and 
        \[\rank B[\ast,\outnodescl{s}]\grp{(I-K)^{-1}[\outnodescl{s},\innodes{t}]}C[\innodes{t},\ast] = \min(k+1, \nu(s,t))\]
    otherwise.
    So by union bound, the desired result holds with probability at least $1-5/n^3$.
\end{proof}

    

    


\section{Vertex Connectivity Algorithm}
\label{sec:k-APVC}


Let $A$ be the adjacency matrix of the graph $G$ with self-loops.
That is, $A$ is the $n\times n$ matrix whose rows and columns are indexed by vertices of $G$, and for every pair $(u,v)$ of vertices, $A[u,v] = 1$ if $v\in\outnodescl{u}$ (equivalently, $u\in\innodescl{v}$), and $A[u,v] = 0$ otherwise.


Recall the definitions of the $\vec{b}_u$ and $\vec{c}_u$ vectors, and the $K, B, C$ and $M_{s,t}$ matrices from \Cref{sec:flow-vertex}.
For each $i\in [k+1]$, let $P_i$ be the $n\times n$ diagonal matrix, with rows and columns indexed by vertices of $G$, such that $P_i[u,u] = \vec{b}_u[i]$.
Similarly, let $Q_i$ be the $n\times n$ diagonal matrix, with rows and columns indexed by vertices of $G$, such that $Q_i[u,u] = \vec{c}_u[i]$.

With these definitions, we present our approach for solving  \textsf{$k$-APVC} in \Cref{alg:k-APVC}.


\begin{algorithm}[ht]
\caption{Our algorithm for solving \textsf{$k$-APVC}.}
\label{alg:k-APVC}
    \begin{algorithmic}[1] 
        \State 
        Compute the $n \times n$ matrix $(I - K)^{-1}$.
        \label{node-step:encode-invert}
        \State
        For each pair $(i,j)\in [k+1]^2$ of indices, compute the $n\times n$ matrix 
        \label{node-step:encode-multiply}
            \[D_{ij} = A P_i(I-K)^{-1}Q_j A^\top.\]
            \vspace{-0.5cm}
        \State
        \label{node-step:decode}
        For each pair $(s,t)$ of vertices, let $F_{s,t}$ be the $(k+1)\times (k+1)$ matrix whose $(i,j)$ entry is equal to $D_{ij}[s,t]$.
            If $(s,t)$ is an edge, output $(\rank F_{s,t})-1$ as the value for $\min(k,\nu(s,t))$.
            Otherwise, output $\min(k,\rank F_{s,t})$ as the value for $\min(k,\nu(s,t))$.
    \end{algorithmic}
\end{algorithm}

The main idea of \Cref{alg:k-APVC} is to use \Cref{lem:vertex-conn-kbyk} to reduce computing $\min(k, \nu(s,t))$ for a given pair of vertices $(s,t)$ to computing the rank of a corresponding $(k+1)\times (k+1)$ matrix, $M_{s,t}$.
To make this approach efficient, we compute the entries of all $M_{s,t}$ matrices simultaneously, using a somewhat indirect argument.

\begin{theorem}
    \label{vertex-flow-correct}
    With probability at least $1 - 5/n$, \Cref{alg:k-APVC} correctly solves \textsf{$k$-APVC}.
\end{theorem}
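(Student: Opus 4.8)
The plan is to establish correctness and runtime of \Cref{alg:k-APVC} in two parts, mirroring the structure of \Cref{thm:correct}. For correctness, the key is to show that the matrix $F_{s,t}$ constructed in Step \ref{node-step:decode} is exactly the matrix $M_{s,t}$ from \cref{eq:Mst-def}, so that \Cref{lem:vertex-conn-kbyk} applies directly. First I would unpack the definition of $D_{ij} = AP_i(I-K)^{-1}Q_jA^\top$: since $A$ is the adjacency matrix with self-loops, the row $A[s,\ast]$ is the indicator vector of $\outnodescl{s}$, and the column $A^\top[\ast,t] = A[t,\ast]^\top$ is the indicator vector of $\innodescl{t}$ (using that $u\in\innodescl{v}$ iff $v\in\outnodescl{u}$). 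Because $P_i$ and $Q_j$ are diagonal with $P_i[u,u]=\vec b_u[i]$ and $Q_j[u,u]=\vec c_u[j]$, computing $D_{ij}[s,t]$ amounts to summing $\vec b_u[i]\,(I-K)^{-1}[u,v]\,\vec c_v[j]$ over $u\in\outnodescl{s}$ and $v\in\innodescl{t}$. This is precisely the $(i,j)$ entry of $B[\ast,\outnodescl{s}]\grp{(I-K)^{-1}[\outnodescl{s},\innodescl{t}]}C[\innodescl{t},\ast] = M_{s,t}$. Hence $F_{s,t} = M_{s,t}$.

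Given that identification, correctness follows quickly. Assume $I-K$ is invertible, which by \Cref{node-prop:original-vc} (or directly by \Cref{lem:invertible}-type reasoning) holds with high probability. By \Cref{lem:vertex-conn-kbyk}, for any fixed pair $(s,t)$, with probability at least $1-5/n^3$ we have $\rank F_{s,t} = \min(k+1,\nu(s,t)+1)$ when $(s,t)$ is an edge, and $\rank F_{s,t} = \min(k+1,\nu(s,t))$ otherwise. In the edge case, outputting $(\rank F_{s,t}) - 1$ gives $\min(k,\nu(s,t))$; in the non-edge case, outputting $\min(k,\rank F_{s,t})$ gives $\min(k,\min(k+1,\nu(s,t))) = \min(k,\nu(s,t))$. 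A union bound over all at most $n^2$ pairs $(s,t)$, together with the event that $I-K$ is invertible, shows the algorithm is correct with probability at least $1 - 5/n^3\cdot n^2 - o(1) \ge 1 - 5/n$, matching the claimed bound (here I would be slightly careful to fold the invertibility failure probability into the same $5/n$ slack, as is done in \Cref{thm:correct}).

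I would then address the runtime, since \Cref{thm:vertex-flow} needs the $\tilde O(k^2 n^\omega)$ bound. Step \ref{node-step:encode-invert} inverts an $n\times n$ matrix over $\mathbb F_p$ with $p = \tilde\Theta(n^5)$, costing $\tilde O(n^\omega)$ by \Cref{lem:compute-inverse}. Step \ref{node-step:encode-multiply} computes $D_{ij} = AP_i(I-K)^{-1}Q_jA^\top$ for each of the $(k+1)^2 = O(k^2)$ index pairs; each such product is a constant number of $n\times n$ matrix multiplications (the $P_i, Q_j$ are diagonal so applying them is $O(n^2)$), so this step costs $\tilde O(k^2 n^\omega)$ total. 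Step \ref{node-step:decode} assembles, for each of the $n^2$ pairs $(s,t)$, a $(k+1)\times(k+1)$ matrix $F_{s,t}$ by reading off already-computed entries $D_{ij}[s,t]$, and computes its rank; by \Cref{lem:rank} each rank computation is $\tilde O(k^2 + k^\omega) = \tilde O(k^\omega)$ field operations, for a total of $\tilde O(k^\omega n^2)$. Summing, the bottleneck is Step \ref{node-step:encode-multiply}, giving $\tilde O(k^2 n^\omega)$ overall.

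The main obstacle is the bookkeeping in the first part: verifying cleanly that $D_{ij}[s,t] = M_{s,t}[i,j]$ requires carefully tracking which index set indexes rows versus columns of $(I-K)^{-1}$ and confirming that the adjacency-with-self-loops convention makes $A[s,\ast]$ select exactly $\outnodescl{s}$ on one side and exactly $\innodescl{t}$ on the other. Everything else — the probability union bound and the runtime accounting — is routine given \Cref{lem:vertex-conn-kbyk}, \Cref{lem:compute-inverse}, and \Cref{lem:rank}. One subtlety worth a sentence: $F_{s,t}$ and $M_{s,t}$ are only defined for distinct $s\neq t$, but since self-loops were removed and $\nu(s,s)$ is not requested, this is not an issue.
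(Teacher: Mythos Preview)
Your proposal is correct and follows essentially the same approach as the paper: you show $F_{s,t}=M_{s,t}$ by expanding $D_{ij}[s,t]$ (this is exactly the paper's Claim~\ref{simultaneousentries}), then apply \Cref{lem:vertex-conn-kbyk} and union bound over all pairs. Two minor notes: the runtime analysis you include belongs to \Cref{thm:vertex-flow}, not this theorem; and you need not separately account for the invertibility of $I-K$, since that event is already absorbed into the $1-5/n^3$ bound of \Cref{lem:vertex-conn-kbyk} (via \Cref{node-prop:original-vc}), so the union bound over $n^2$ pairs directly gives $1-5/n$.
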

\begin{proof}
We prove correctness of \Cref{alg:k-APVC} using the following claim.

\begin{claim}
    \label{simultaneousentries}
    For all pairs of indices $(i,j)\in [k+1]^2$ and all pairs of vertices $(s,t)$, we have 
            \[M_{s,t}[i,j] = D_{ij}[s,t],\]
    where $D_{ij}$ is the matrix computed in step \ref{node-step:encode-multiply} of \Cref{alg:k-APVC}.
\end{claim}
\begin{claimproof}
    By expanding out the expression for $D_{ij}$ from step \ref{node-step:encode-multiply} of \Cref{alg:k-APVC}, we have 
        \[D_{ij}[s,t] = \sum_{u, v} A[s,u] P_i[u,u] \grp{(I-K)^{-1}[u,v]}Q_j[v,v] A[v,t],\]
    where the sum is over all vertices $u, v$ in the graph (here, we  use the fact that $P_i$ and $Q_j$ are diagonal matrices).
    By the definitions of $A$,  the $P_i$, and the $Q_j$ matrices, we have
        \begin{equation}
        \label{eq:Dij-entries}
        D_{ij}[s,t] = \sum_{\substack{ u\in\outnodescl{s} \\ v\in\innodescl{t}}} \vec{b}_u[i] \grp{(I-K)^{-1}[u,v]} \vec{c}_v[j].
        \end{equation}
    On the other hand, the definition of $M_{s,t}$ from \Cref{eq:Mst-def} implies that
        \[M_{s,t}[i,j] = \sum_{\substack{u\in\outnodescl{s}\\ v\in\innodescl{t}}} B[i,u] \grp{(I-K)^{-1}[u,v]} C[v,j].\]
    Since $B[i,u] = \vec{b}_u[i]$ and $C[v,j] = \vec{c}_v[j]$, the above equation and \Cref{eq:Dij-entries} imply that 
        \[M_{s,t}[i,j] = D_{ij}[s,t]\]
     for all $(i,j)$ and $(s,t)$, as desired.
\end{claimproof}

By \Cref{simultaneousentries}, the matrix $F_{s,t}$ computed in step \ref{node-step:decode} of \Cref{alg:k-APVC} is equal to $M_{s,t}$.
So by \Cref{lem:vertex-conn-kbyk}, for any fixed pair $(s,t)$ of vertices we have 
        \begin{equation}
        \label{FinsteadofM}
        \rank F_{s,t} = \begin{cases}\min(k+1, \nu(s,t)+1) & \text{if }(s,t)\text{ is an edge} \\ \min(k+1,\nu(s,t)) & \text{otherwise.}\end{cases}
        \end{equation}
     with probability at least $1-5/n^3$.
     Then by a union bound over all pairs of vertices $(s,t)$, we see that \Cref{FinsteadofM} holds for all pairs $(s,t)$, with probability at least $1-5/n$.
     
     Assume this event occurs. 
     Then if $(s,t)$ is an edge, by \Cref{FinsteadofM} we correctly return 
        \[(\rank F_{s,t}) - 1 = \min(k+1, \nu(s,t)+1) - 1 = \min(k,\nu(s,t))\]
    as our answer for this pair.

    Similarly, if $(s,t)$ is not an edge, by \Cref{FinsteadofM} we correctly return 
        \[\min(k,\rank F_{s,t}) = \min(k,k+1,\nu(s,t)) = \min(k,\nu(s,t))\]
    as our answer for this pair.
This proves the desired result.
\end{proof}

With \Cref{vertex-flow-correct} established, we can prove our result for vertex connectivities.

\vertexflowthm*
\begin{proof}
    By \Cref{vertex-flow-correct}, \Cref{alg:k-APVC} correctly solves the \textsf{$k$-APVC} problem.
    We now argue that \Cref{alg:k-APVC} can be implemented to run in $\tilde{O}(k^2n^\omega)$ time.

    In step \ref{node-step:encode-invert} of \Cref{alg:k-APVC}, we need to compute $(I-K)^{-1}$.
    Since $K$ is an $n\times n$ matrix, by \Cref{lem:compute-inverse} we can complete this step in $\tilde{O}(n^\omega)$ time.

    In step \ref{node-step:encode-multiply} of \Cref{alg:k-APVC}, we need to compute $D_{ij}$ for each pair $(i,j)\in [k+1]^2$.
    For each fixed pair $(i,j)$, the $D_{ij}$ matrix is defined as a product of five $n\times n$ matrices whose entries we know, so this step takes  $\tilde{O}(k^2n^\omega)$ time overall.

    In step \ref{node-step:decode} of \Cref{alg:k-APVC}, we need to construct each $F_{st}$ matrix, and compute its rank.
    Since each $F_{st}$ matrix has dimensions $(k+1)\times (k+1)$ and its entries can be filled in simply by reading entries of the $D_{ij}$ matrices we have already computed, by \Cref{lem:rank} this step can be completed in $\tilde{O}(k^{\omega}n^2)$ time.

    By adding up the runtimes for each of the steps and noting that $k\le n$, we see that \Cref{alg:k-APVC} solves \textsf{$k$-APVC} in $\tilde{O}(k^2n^\omega)$ time, as claimed.
\end{proof}

\section{Conclusion}
\label{sec:conclusion}
 
In this paper, we presented algorithms solving \textsf{$k$-APC} and \textsf{$k$-APVC} in  $\tilde{O}((kn)^\omega)$ and $\tilde{O}(k^2n^\omega)$ time respectively.
Many open problems remain concerning the exact time complexity of these problems.
We highlight some open questions we find particularly interesting:

\begin{enumerate}
    \item 
    The most relevant question to our work: can we solve \textsf{$k$-APC} or \textsf{$k$-APVC} faster? 
    Is it possible to solve \textsf{$k$-APC} in $\tilde{O}(k^2n^\omega)$ time, as fast as our algorithm for \textsf{$k$-APVC}? 
    Could there be some moderately large parameter values $k\ge n^{\Omega(1)}$ for which these problems can be solved in $\tilde{O}(n^\omega)$ time, matching the runtime for constant $k$? 

    \item
    Can we get better conditional lower bounds for \textsf{$k$-APC} and \textsf{$k$-APVC}?
    Currently, no conditional lower bound rules out the possibility that these problems could, for example, be solved in 
    $\tilde O(k n^\omega)$ time.
    For the \textsf{APC} and \textsf{APVC} problems, can the known $n^{\omega(1,2,1)-o(1)}$ conditional lower bounds 
    be improved\footnote{There is some evidence that better lower bounds may be difficult to establish \cite{Trabelsi23}.} to $n^{4-o(1)}$ conditional lower bounds?

    \item 
    Recently, \cite{Trabelsi23} showed that there is a nondeterministic verifier for the \textsf{APVC} problem, running in $O(n^{\omega(1,2,1)})$ time.
    Is there a nondeterministic verifier for \textsf{APC} with the same runtime? 
    Are there nondeterministic verifiers for the \textsf{$k$-APC} and \textsf{$k$-APVC} problems, which run faster than the algorithms from \Cref{thm:flow,thm:vertex-flow}?


\end{enumerate}

\bibliography{main}

\appendix

\section{Conjectures in Fine-Grained Complexity}
\label{sec:fg-conjectures}

In the \textsf{Boolean Matrix Multiplication (BMM)} problem, we are given $n\times n$ matrices $A$ and $B$ with entries in $\set{0,1}$, and are tasked with computing, for each pair $(i,j)\in [n]^2$, whether there exists an index $k$ such that $A[i,k] = B[k,j] = 1$.
Using matrix multiplication, we can solve \textsf{BMM} in $O(n^\omega)$ time.

The \textsf{BMM} hypothesis\footnote{The literature also refers to the \textsf{Combinatorial BMM} hypothesis, an informal conjecture that no ``combinatorial'' algorithm for \textsf{BMM} runs in $O(n^{3-\delta})$ time, for any constant $\delta > 0$.} posits that this is essentially optimal, and asserts that there is no constant $\delta > 0$ such that \textsf{BMM} can be solved in $O(n^{\omega - \delta})$ time. 

Let $k$ be a positive integer.
In the \textsf{$k$SAT} problem, we are given a \textsf{$k$-CNF} (a Boolean formula which can be written as a conjunction of clauses, where each clause is the disjunction of at most $k$ Boolean literals) over $n$ variables, and tasked with determining if there is some assignment of values to the variables which satisfies the \textsf{$k$-CNF}.

The \textsf{Strong Exponential Time Hypothesis (SETH)} conjectures that for any constant $\delta > 0$, there exists some positive integer $k$ such that \textsf{$k$SAT} cannot be solved in $2^{(1-\delta)n}\poly(n)$ time. 

In the \textsf{$4$-Clique} problem, we are given a graph $G$, and tasked with determining if it contains a clique on four vertices (i.e., four distinct vertices which are mutually adjacent).

Let $\omega(1,2,1)$ be the smallest positive real such that we can multiply an $n\times n^2$ matrix with an $n^2\times n$ matrix in $n^{\omega(1,2,1) + o(1)}$ time.
It is known that \textsf{$4$-Clique} can be solved in $O\grp{n^{\omega(1,2,1)}}$ time. 
The \textsf{$4$-Clique Conjecture\footnote{This conjecture also has informal counterpart in the literature, which states that no ``combinatorial'' algorithm for \textsf{$4$-Clique} runs in $O(n^{4-\delta})$ time, for any constant $\delta > 0$.}} asserts that this runtime is essentially optimal, in the sense that for any constant $\delta > 0$, the \textsf{$4$-Clique} problem cannot be solved in $O\grp{n^{\omega(1,2,1)-\delta}}$ time.

\section{Vertex Connectivity Encoding Scheme}
\label{sec:app-proof}

In \Cref{correctingtheliterature}, we describe the result that \cite{ap-bounded-mincut} obtains for computing vertex cuts, and explain how the claims in \cite{ap-bounded-mincut} differ from the arguments in this work.
In \Cref{anewproof}, we present a proof of \Cref{node-prop:original-vc}.

\subsection{Discussion of Previous Vertex Cuts Algorithm}
\label{correctingtheliterature}

In \cite{ap-bounded-mincut}, the authors present an $\tilde{O}((kn)^\omega)$ time algorithm for the  \textsf{$k$-Bounded All-Pairs Minimum Vertex Cut} problem.
In this problem, we are given a directed graph $G$ on $n$ vertices, and are tasked with returning, for every pair of vertices $(s,t)$, the size of a minimum $(s,t)$-vertex cut, if this size is less than $k$.
For pairs $(s,t)$ where the size of a minimum $(s,t)$-vertex cut is at least $k$, we simply need to return the value $k$.

When $(s,t)$ is not an edge, Menger's theorem implies that the size of a minimum $(s,t)$-vertex cut is equal to the maximum number of internally vertex-disjoint paths from $s$ to $t$.
So, for such pairs $(s,t)$, the \textsf{$k$-Bounded All-Pairs Minimum Vertex Cut} problem simply requires we compute the value of $\min(k,\nu(s,t))$, as in the \textsf{$k$-APVC} problem.

However, when $(s,t)$ is an edge, as discussed in \Cref{sec:prelim}, no $(s,t)$-vertex cut can exist.
This is because no matter which vertices outside of $s$ and $t$ we delete, the resulting graph will always have a path of length one from $s$ to $t$.

In this case, it may be reasonable to define the ``size of a minimum $(s,t)$-vertex cut'' to be $\infty$.
With this convention, for pairs of vertices $(s,t)$ which are edges, we simply need to return the value of $k$ in the  \textsf{$k$-Bounded All-Pairs Minimum Vertex Cut} problem.
This is precisely what the algorithm of \cite{ap-bounded-mincut} does.

In more detail, the argument sketched 
in \cite[Proof of Lemma 5.1]{ap-bounded-mincut} argues that \begin{equation}\label{expression}\rank~(I-K)^{-1}[\outnodes{s},\innodes{t}]\end{equation} is equal to the size of a minimum $(s,t)$-vertex cut, for all pairs of vertices $(s,t)$ which are not edges, with high probability (where $K$ is defined as in \Cref{subsec:apvc}).

            The algorithm from \cite[Proof of Theorem 5.2]{ap-bounded-mincut} first modifies the graph, and then computes the value of \Cref{expression} for every pair of original vertices $(s,t)$ with respect to the new graph, to compute the answers to the  \textsf{$k$-Bounded All-Pairs Minimum Vertex Cut} problem.

    As described in \Cref{subsec:apvc}, the graph is transformed by  introducing for each vertex $s$ a set $S$ of $k$ new nodes, adding edges from $s$ to each node in $S$, adding edges from all nodes in $S$ to all out-neighbors of $s$ in the original graph, and erasing all edges from $s$ to its original out-neighbors. 
    Similarly, the transformation also introduces for each vertex $t$ a set $T$ of $k$ new nodes, adds edges from all in-neighbors of $t$ to the nodes in $T$, adds edges from all nodes in $T$ to $t$, and erases all edges entering $t$ from its original in-neighbors. 

    If $(s,t)$ was not an edge in the original graph, then $(s,t)$ is still not an edge in the new graph.
    In this scenario, the vertex connectivity from $s$ to $t$ in the new graph turns out to be $\min(k,\nu(s,t))$.
    Since $(s,t)$ is not an edge, the value $\nu(s,t)$ coincides with the size of a minimum $(s,t)$-vertex cut.
    So in this case, returning the value of \Cref{expression} produces the correct answer for the \textsf{$k$-Bounded All-Pairs Minimum Vertex Cut} problem.

    Suppose now that $(s,t)$ is an edge in the original graph.
    Then $(s,t)$ will not be an edge in the new graph, since the new out-neighbors of $s$ are all distinct from the new in-neighbors of $t$.
    However, the transformation ensures that in the new graph there are $k$ internally vertex-disjoint paths from $s$ to $t$, because each of the $k$ new out-neighbors of $s$ has an edge to each of the $k$ new in-neighbors of $t$. 
    Then in this case, returning the value of \Cref{expression} just amounts to returning the value $k$.
    
    So the algorithm of \cite[Proof of Theorem 5.2]{ap-bounded-mincut} produces the correct answer for the \textsf{$k$-Bounded All-Pairs Minimum Vertex Cut} problem when $(s,t)$ is an edge, using the convention that the size of a minimum $(s,t)$-vertex cut is $\infty$ when $(s,t)$ is an edge.

In summary: the algorithm from \cite{ap-bounded-mincut} runs in $\tilde{O}((kn)^\omega)$ time and computes $\min(k,\nu(s,t))$ for all pairs of vertices $(s,t)$ such that $(s,t)$ is not an edge, but for pairs where $(s,t)$ is an edge, does not return any information about the value of $\nu(s,t)$.

\paragraph*{Moving From Vertex Cuts to Vertex Connectivities}

        The quantity in \Cref{expression} differs from the expression we use  \Cref{node-prop:original-vc}, where we index the rows and columns by $\outnodescl{s}$ and $\innodescl{t}$, instead of $\outnodes{s}$ and $\innodes{t}$ respectively.
    For the purpose of solving \textsf{$k$-APVC}, we need to work with a different submatrix from the one used in \Cref{expression}, because 
    when $(s,t)$ is an edge, the expression in \Cref{expression} is not necessarily equal to $\nu(s,t)$.
    
    For example, suppose $G$ is a graph where $\indeg(s) = 0$, $\outdeg(t)=1$, there is an edge from $s$ to $t$, and $\nu(s,t) = 1$.
    Then the proof sketch in \cite[Proof of Lemma 5.1]{ap-bounded-mincut} (using the terminology of \Cref{subsec:apvc}) suggests pumping unit vectors to nodes in $\outnodes{s}$, using these initial vectors to determine flow vectors for all nodes, and then computing the rank of the flow vectors assigned to nodes in $\innodes{t}$.
    In this example, since $s$ has indegree zero and no initial vector is pumped to it, it is assigned the flow vector $\vec{s} = \vec{0}$.
    Since $\innodes{t} = \set{s}$ in this example, the rank of flow vectors in $\innodes{t}$ is just zero, even though $\nu(s,t) = 1$.
    
        This issue arises more generally in the proof suggested by \cite[Proof of Lemma 5.1]{ap-bounded-mincut} whenever $(s,t)$ is an edge of the graph.
    Intuitively, this is because a maximum collection of internally vertex-disjoint paths from $s$ to $t$ will always include a path consisting of a single edge from $s$ to $t$, but if we do not pump out a vector at $s$, this path will not contribute to the rank of the flow vectors entering $t$.

    To overcome this issue and correctly compute for vertex connectivities for all pairs, we  modify the expression from \Cref{expression} appropriately (by pumping out an initial vector at the source $s$, and allowing the flow vector assigned to $t$ to contribute to teh rank), which is why our statement of \Cref{node-prop:original-vc} involves computing the rank of a different submatrix.
    
    The issue described above does not appear when dealing with edge connectivity, essentially because in that case there is always a set of edges whose removal destroys all $s$ to $t$ paths.


\subsection{Proof of Vertex Connectivity Encoding}
\label{anewproof}

In this section, we  provide a proof of \Cref{node-prop:original-vc}, whose statement we recall below.

\subtleay*

Our proof of \Cref{node-prop:original-vc} is similar to the outline in \cite[Section 5]{ap-bounded-mincut} and is based off ideas from \cite[Section 2]{apc}, but involves several modifications needed to address issues which arise when dealing with vertex connectivities. 

Fix a source node $s$ and target node $t$ in the input graph $G$.
Write $v_0 = s$.
Suppose $s$ has outdegree $d = \outdeg(s)$.
Let $v_1, \dots, v_d$ be the out-neighbors of $s$ from $\outnodes{s}$.

Take a prime $p = \Theta(n^5)$ (this is the same prime $p$ from \Cref{sec:flow-vertex}).
Let $\vec{u}_0, \dots, \vec{u}_d$ denote distinct unit vectors in $\mathbb{F}_p^{d+1}$.

Eventually, we will assign each vertex $v$ in $G$ a vector $\vec{v}\in\mathbb{F}_p^{d+1}$, which we call a \emph{flow vector}.
These flow vectors are determined by a system of vector equations.
To describe these equations, we first need to introduce some symbolic matrices.

Let $Z$ be an $n\times n$ matrix, whose rows and columns are indexed by vertices in $G$. 

If $(u,v)$ is an edge, we set
    $Z[u,v] = z_{uv}$
to be an indeterminate, and otherwise $Z[u,v] = 0$. 

Consider the following procedure.
We assign independent, uniform random values from the field $\mathbb{F}_p$ to each variable $z_{uv}$.
Let $K$ be the $n\times n$ matrix resulting from this assignment to the variables in $Z$ (this agrees with the definition of $K$ in \Cref{sec:flow-vertex}).

Now, to each vertex $v$, we assign a flow vector $\vec{v}$, satisfying the following equalities:
\begin{enumerate}
    \item 
        Recall that $\outnodescl{s} = \set{v_0, \dots, v_d}$.
        For each vertex $v_i$, we require its flow vector satisfy
            \begin{equation}
            \label{vertex-flow-sout}
            \vec{v}_i = \grp{\sum_{u\in \innodes{v_i} } \vec{u}\cdot K[u,v_i]} + \vec{u}_i.
            \end{equation}
    \item
        For each vertex $v\not\in \outnodescl{s}$, we require its flow vector satisfy
        \begin{equation}
            \label{vertex-flow-general}
            \vec{v} = \sum_{u\in\innodes{v} } \vec{u}\cdot K[u,v].
        \end{equation}
\end{enumerate}

It is not obvious that such flow vectors exist, but we show below that they do
(with high probability over the random assignment to the $z_{uv}$ variables). 
Let $H_s$ denote the $(d+1)\times n$ matrix whose columns are indexed by vertices in $G$, such that the column associated with $v_i$ is $\vec{v}_i$ for each index $0\le i\le d$, and the rest of the columns are zero vectors.
Let $F$ be the $(d+1)\times n$ matrix, with columns indexed by vertices in $G$, such that the column associated with a vertex $v$ is the corresponding flow vector $\vec{v}$.

Then \Cref{vertex-flow-sout,vertex-flow-general} are captured by the matrix equation
    \begin{equation}
    \label{node-eq:flow-matrix-equation}
    F = FK + H_s.
    \end{equation}

We will prove that flow vectors $\vec{v}$ satisfying the above conditions exist, by showing that we can solve for $F$ in \Cref{node-eq:flow-matrix-equation}.
To do this, we use the following lemma.

\begin{lemma}
    \label{node-lem:invertible}
    We have $\det(I-K)\neq 0$, with probability at least $1-1/n^3$.
\end{lemma}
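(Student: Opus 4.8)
The plan is to reuse the argument behind \Cref{lem:invertible}, now applied to the vertex-indexed $n\times n$ matrix $Z$ from \Cref{sec:flow-vertex} in place of the edge-indexed matrix from \Cref{sec:flow}. I would treat $\det(I-Z)$ as a polynomial over $\mathbb{F}_p$ in the indeterminates $\{z_{uv} : (u,v)\text{ is an edge of }G\}$, and establish two things about it: it has degree at most $n$, and it is not the zero polynomial. Then a single application of the Schwartz--Zippel Lemma finishes the argument.

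For the degree bound, note that every entry of $Z$ is either $0$ or a single indeterminate $z_{uv}$, so every entry of $I-Z$ is a polynomial of degree at most $1$; since $\det$ of an $n\times n$ matrix is a sum of products of $n$ entries (one from each row and column), $\det(I-Z)$ has degree at most $n$. For nonvanishing, I would use the no-self-loop assumption on $G$: since $(v,v)$ is never an edge, $Z[v,v]=0$ for every vertex $v$, so the diagonal of $I-Z$ is identically $1$. Substituting $0$ for every indeterminate therefore gives $\det(I-Z)=\det I = 1 \neq 0$, which shows $\det(I-Z)$ is a nonzero polynomial of degree at most $n$.

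Finally, since $K$ is obtained from $Z$ by assigning each indeterminate an independent uniform value in $\mathbb{F}_p$, \Cref{schwartz-zippel} gives that $\det(I-K)=0$ with probability at most $n/p$. Because $p = \Theta(n^5)$ — in particular $p \ge n^4$ for large $n$ — this probability is at most $1/n^3$, which is exactly the claimed bound. I do not expect a genuine obstacle here; the only points requiring care are confirming that the no-self-loop assumption makes the diagonal-substitution argument valid (so the polynomial is nonzero), and checking that the implicit constant in $p = \Theta(n^5)$ is large enough that $n/p \le 1/n^3$, both of which are immediate from the setup of \Cref{sec:flow-vertex}.
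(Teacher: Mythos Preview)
Your proposal is correct and follows essentially the same argument as the paper: show $\det(I-Z)$ is a nonzero polynomial of degree at most $n$ (using the no-self-loop assumption to get constant term $1$), then apply the Schwartz--Zippel Lemma with $p=\Theta(n^5)$ to bound the failure probability by $n/p\le 1/n^3$.
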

\begin{proof}
    Each entry of $Z$ is a polynomial of degree at most one with constant term zero.
    Since the input graph has no self-loops, the diagonal entries of $Z$ are all zeros.

    Thus, $\det(I-Z)$ is a polynomial of degree at most $n$ and constant term $1$ (which means this polynomial is nonzero).
    So by the Schwartz-Zippel Lemma (\Cref{schwartz-zippel}), $\det(I-K)$ is nonzero with probability at least $1 - n/p \ge 1/n^4$ (by taking $p\ge n^5)$ as claimed.
\end{proof}
Suppose from now on that $\det(I-K)\neq 0$ (by \Cref{node-lem:invertible}, this occurs with high probability).
In this case, the flow vectors are well-defined, and by \Cref{node-eq:flow-matrix-equation} occur as columns of
    \begin{equation}
    \label{node-eq:flow-closed-form}
        F = H_s(I-K)^{-1} = \frac{H_s\grp{\text{adj}(I-K)}}{\det(I-K)}.
    \end{equation}

\begin{lemma}
    \label{node-lem:an-upper-bound}
    For any vertex $t$ in $G$, with probability at least $1-1/n^3$, we have 
        \[\rank F[\ast, \innodescl{t}] \le 
        \begin{cases}\nu(s,t) + 1 & \text{if }(s,t)\text{ is an edge} \\ \nu(s,t) & \text{otherwise}.\end{cases}\]
\end{lemma}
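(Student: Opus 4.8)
The plan is to mirror the proof of \Cref{lem:conn-upper-bound}, replacing the minimum edge cut by a minimum vertex cut, and to treat the two cases of the statement uniformly by first passing to a slightly modified graph. Set $G^{-} := G$ when $(s,t)$ is not an edge, and $G^{-} := G$ with the single edge $(s,t)$ deleted (recall $G$ is simple in this setting) when $(s,t)$ is an edge; in both cases $(s,t)$ is a non-edge of $G^{-}$, so a minimum $(s,t)$-vertex cut $C'$ of $G^{-}$ exists and $\abs{C'} = \nu_{G^{-}}(s,t)$ by Menger's theorem. The first step is the combinatorial observation that $\nu_{G^{-}}(s,t) = \nu(s,t)$ if $(s,t)$ is not an edge, and $\nu_{G^{-}}(s,t) = \nu(s,t)-1$ if $(s,t)$ is an edge. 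The former is immediate; for the latter, the one-edge path $(s,t)$ has no internal vertices, hence is internally vertex-disjoint from every $s$-$t$ path, so it lies in every maximum collection of internally vertex-disjoint $s$-$t$ paths of $G$ (else we could enlarge the collection), and deleting it decreases the count by exactly one (a collection of $\nu(s,t)$ such paths in $G^{-}$ would extend, via the direct edge, to a collection of $\nu(s,t)+1$ in $G$).

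Next, as in \Cref{lem:conn-upper-bound}, let $S$ be the vertices reachable from $s$ in $G^{-}-C'$ and $T$ the vertices that reach $t$ in $G^{-}-C'$; then $S, T, C'$ are pairwise disjoint with $s \in S$ and $t \in T$. Since the only edge of $G$ absent from $G^{-}$ is $(s,t)$ and its head is $t$, routine bookkeeping shows: every in-neighbor (in $G$) of a vertex of $T \setminus \set{t}$ lies in $T \cup C'$; every in-neighbor (in $G$) of $t$ lies in $T \cup C' \cup \set{s}$, with $s$ relevant only when $(s,t)$ is an edge; no vertex of $T \setminus \set{t}$ lies in $\outnodescl{s}$; and $t \in \outnodescl{s}$ exactly when $(s,t)$ is an edge. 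In particular $\innodescl{t} \sub T \cup C' \cup \set{s}$, with the $\set{s}$ needed only in the edge case.

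Now restrict the flow-vector equation \Cref{node-eq:flow-matrix-equation} to the columns indexed by $T$, obtaining $F[\ast,T] = F[\ast,T]\,K[T,T] + H'$. By the structure above, for $v \in T \setminus \set{t}$ the rule \Cref{vertex-flow-general} applies and the only contributions to $\vec{v}$ from outside $T$ come from $C'$; for $v = t$, when $(s,t)$ is an edge the rule \Cref{vertex-flow-sout} additionally contributes the unit vector $\vec{u}_i$ the construction assigns to $t$, and the in-neighbor $s$ contributes $\vec{s}\cdot K[s,t]$. Thus every column of $H'$ lies in $\mathrm{span}\set{\vec{c}: c \in C'}$, except possibly the $t$-column, which additionally lies in $\mathrm{span}\set{\vec{s}, \vec{u}_i}$. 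Exactly as in \Cref{node-lem:invertible}, $\det(I - K[T,T]) \neq 0$ with probability at least $1 - n/p$, and on this event $F[\ast,T] = H'(I - K[T,T])^{-1}$, so the column space of $F[\ast,T]$ is contained in that of $H'$. Combining with $\innodescl{t} \sub T \cup C' \cup \set{s}$: the columns of $F[\ast,\innodescl{t}]$ lie in the span of the column space of $H'$ together with $\set{\vec{c}: c \in C'}$ and (in the edge case) $\vec{s}$; since the column space of $H'$ already lies in $\mathrm{span}(\set{\vec{c}: c\in C'} \cup \set{\vec{s}\cdot K[s,t] + \vec{u}_i})$, this whole span has dimension at most $\abs{C'}$ in the non-edge case and at most $\abs{C'} + 2$ in the edge case. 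Invoking the combinatorial identity from the first step ($\abs{C'} = \nu(s,t)$, resp.\ $\nu(s,t)-1$) gives the claimed bound, and a union bound over the events $\det(I-K) \neq 0$ and $\det(I - K[T,T]) \neq 0$ yields probability at least $1 - 1/n^3$ for $p = \Theta(n^5)$.

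I expect the main obstacle to be sharpening the edge case from $\nu(s,t)+2$ down to $\nu(s,t)+1$: the flow vector of $t$ genuinely picks up two ``extra'' generators, $\vec{s}$ and $\vec{u}_i$, so a naive count only gives $\abs{C'}+2$; the saving must come entirely from the combinatorial identity $\nu_{G \setminus \set{(s,t)}}(s,t) = \nu(s,t)-1$, which is what lets one of those two extra generators be absorbed. The secondary point to get right is the bookkeeping of which in-neighbors of $T$-vertices can escape $T \cup C'$, and why vertices of $\outnodescl{s}$ other than $t$ never enter $T$; working with $G^{-}$ rather than cutting directly in $G$ is what keeps this case analysis manageable.
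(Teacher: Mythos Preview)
Your proposal is correct and follows essentially the same approach as the paper: pass to $G$ with the edge $(s,t)$ deleted, take a minimum $(s,t)$-vertex cut $C'$ there, restrict the flow equation to the $t$-side of the cut, and use Schwartz--Zippel for the invertibility of the restricted block. The only cosmetic difference is that the paper restricts to the larger set $V'$ consisting of $T$ together with all in-neighbors of $T$ (so that $C'$, and in the edge case $s$ and $t$, are already inside the index set), and then simply counts nonzero columns of $H'$ as at most $|C|$ where $C=C'$ or $C=C'\cup\{s,t\}$; you instead restrict to $T$ alone, bound the column \emph{span} of $H'$ by $|C'|$ (resp.\ $|C'|+2$) generators, and then add back the columns for $C'\cap\innodescl{t}$ and $s$ by hand---both bookkeepings land on the same bound via the identity $|C'|=\nu(s,t)-1$ in the edge case.
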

\begin{proof}
    Abbreviate $\nu = \nu(s,t)$.
    Conceptually, this proof works by arguing that the flow vectors assigned to all in-neighbors of $t$ are linear combinations of the flow vectors assigned to nodes in a minimum $(s,t)$-vertex cut of the graph $G$ with edge $(s,t)$ removed.

    Let $G'$ be the graph $G$ with edge $(s,t)$ removed (if $(s,t)$ is not an edge, then $G'=G$).
    Let $C'$ be a minimum $(s,t)$-vertex cut of $G'$.
    This means that $C'$ is a minimum size set of nodes (where $s,t\notin C'$) such that deleting $C'$ from $G'$ produces a graph with no $s$ to $t$ path.
    After removing the nodes in $C'$ from $G'$, let $S$ be the set of vertices reachable from $s$, and $T$ be the set of vertices which can reach $t$.
 
    If $(s,t)$ is an edge, set $C = C'\cup\set{s,t}$.
    Otherwise, set $C = C'$.
    
    By Menger's theorem, $|C|=\nu$ if $(s,t)$ is not an edge in $G$, and otherwise $|C| = \nu+1$.
    This is because in addition to the edge $(s,t)$, we can find $|C'|$ internally-vertex disjoint paths from $s$ to $t$, so $\nu(s,t) = |C'|+1=|C|-1$.

    Let $V'$ be the set of vertices which are in $T$, or have an edge to a vertex in $T$.
    Then set $K' = K[V',V']$ and $F' = F[\ast, V']$.
    Now define the matrix $H' = F' - F'K'.$
    By definition, the columns of $H'$ are indexed by vertices in $V'$.

    \begin{claim}
    \label{fixaroni}
        For all $v\in V'\setminus C$, we have $H'[\ast,v] = \vec{0}$.
    \end{claim}
    \begin{claimproof}
    Take $v\in V'\setminus C$.

    Note that $v\neq s$.
    Indeed, if $(s,t)$ is an edge, then $s\in C$, so $v\neq s$ is forced since $v\not\in C$.
    If $(s,t)$ is not an edge, then by definition of a vertex cut, $s$ can have no edge to $T$ and $s$ is not in $T$, which means that $s\not\in V'$, again forcing $v\neq s$.
    
    We first handle the case where $v\neq t$.

    \textbf{Case 1: $v\neq t$}

    In this case, by definition of a vertex cut, $v\in T$.
    Thus, the in-neighbors of $v$ are all members of $V'$.
    By the discussion at the beginning of this proof, $v\neq s$.

    We claim that $v\not\in\outnodes{s}$.
    Indeed, suppose to the contrary that $v\in \outnodes{s}$.
    By the case assumption, $v\neq t$.
    By definition, $v\not\in C$ and $v$ has an edge to $T$.
    This means there is a path of length two from $s$ to a vertex in $T$, not using any vertices in $C$.
    This contradicts the definition of an $(s,t)$-vertex cut.
    So it must be the case that $v\not\in\outnodes{s}$ as claimed.

    So in this case, we have shown that for $v\in V'\setminus C$, we have $\innodes{v} \sub V'$ and $v\not\in\outnodes{s}$.
        It follows from 
    \Cref{vertex-flow-general} and the definitions of $F'$ and $K'$ that 
        \begin{equation}
        \label{tedious-check}
        (F'K')[\ast,v] = \sum_{u\in \innodes{v}\cap V'} \vec{u}\cdot K'[u,v] = \sum_{u\in \innodes{v}} \vec{u}\cdot K[u,v] = F[\ast, v] = F'[\ast, v]
        \end{equation}
    which proves that 
        \[H'[\ast, v] =   F'[\ast, v] -  (F'K')[\ast,v]= \vec{0}\]
            as desired. 
    It remains to handle the case where $v\in V'\setminus C$ has $v = t$.

    \textbf{Case 2: $v = t$}
    
    In this case, by definition of $C$, there must be no edge from $s$ to $t$.

    Hence $t\not\in\outnodes{s}$.
    Of course we have $\innodes{t}\sub V'$ by definition.
    So the calculation from \Cref{tedious-check} applies to $v=t$ as well, which means that $H'[\ast, t] = \vec{0}$.

    Thus for all $v\in V'\setminus C$, we have $H'[\ast,v] = \vec{0}$ as desired.
    \end{claimproof}

    By \Cref{fixaroni}, $H'$ has at most $|C|$ nonzero columns.
    Thus, $\rank H' \le |C|$.

    Similar reasoning to the proof of \Cref{node-lem:invertible}, shows that matrix $I-K'$ is invertible, with probability at least $1-1/n^3$.
    So we can solve for $F'$ in the equation
        $H' = F'-F'K'$
    to get 
        \[F' = H'(I-K')^{-1}.\]
    
    Since $\rank H'\le |C|$, the above equation implies that $\rank F'\le |C|$ as well.
    
    By definition, $\innodescl{t}\sub V'$, so $F[\ast, \innodescl{t}]$ is a submatrix of $F'$. It follows that 
        \[\rank F[\ast, \innodescl{t}] \le |C|.\]
        
    Since $|C| = \nu+1$ if $(s,t)$ is an edge, and otherwise $|C|=\nu$, the desired result follows.   
\end{proof}

\begin{lemma}
    \label{node-lem:a-lower-bound}
    For any vertex $t$ in $G$, with probability at least $1-2/n^3$, we have 
        \[\rank F[\ast, \innodescl{t}] \ge \begin{cases}\nu(s,t) + 1 & \text{if }(s,t)\text{ is an edge} \\ \nu(s,t) & \text{otherwise}.\end{cases}\]
\end{lemma}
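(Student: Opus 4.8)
The plan is to mirror the proof of \Cref{lem:conn-lower-bound}, but with a collection of internally vertex-disjoint paths playing the role of edge-disjoint paths, and with extra care taken to track the unit-vector ``offsets'' $\vec{u}_i$ that get injected at out-neighbors of $s$. Abbreviate $\nu = \nu(s,t)$. First I would fix the path collection: by Menger's theorem there are $\nu$ internally vertex-disjoint $s$-$t$ paths $P_1,\dots,P_\nu$, and when $(s,t)$ is an edge I would additionally note that \emph{any} maximum such collection must contain the length-one path consisting of the edge $(s,t)$ itself (otherwise we could append it and obtain a larger collection), so I take $P_1=(s,t)$ in that case. Write $P_i = s\to w_{i,1}\to\cdots\to w_{i,l_i}\to t$ for $i\ge 2$ (and for $i = 1$ in the non-edge case). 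Every $w_{i,j}$ is an internal vertex, hence lies on exactly one path; in particular $w_{i,1}\in\outnodes{s}$, and if we let $\sigma(i)$ be the index with $w_{i,1}=v_{\sigma(i)}$ and $J_i=\set{j : v_j\text{ is an internal vertex of }P_i}$, then the $\sigma(i)$ are distinct, $\sigma(i)\in J_i$, and the $J_i$ are pairwise disjoint. Since the paths use distinct first edges out of $s$ we also get $d\ge\nu$, so the needed unit vectors exist.

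Next I would analyze one \emph{specific} evaluation of the symbolic matrix $Z$ from \Cref{anewproof}: set $z_{uv}=1$ whenever $u$ immediately precedes $v$ on some $P_i$, and $z_{uv}=0$ otherwise (well-defined since the paths are internally vertex-disjoint). Under this evaluation $I-Z$ is invertible, since $Z$ is nilpotent (any walk in the ``consecutive-pair'' digraph just runs along a single $P_i$ to $t$ and then halts), so \Cref{node-eq:flow-closed-form} really does produce flow vectors, and I can compute them by unwinding \Cref{vertex-flow-sout,vertex-flow-general} along the paths. This gives $\vec{s}=\vec{u}_0$ and, by induction along $P_i$, $\vec{w}_{i,l_i}=\vec{u}_0+\sum_{j\in J_i}\vec{u}_j$. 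In the non-edge case $t\notin\outnodes{s}$ and $\vec{t}=\sum_i \vec{w}_{i,l_i}$, and the $\nu$ columns $\vec{w}_{1,l_1},\dots,\vec{w}_{\nu,l_\nu}$ of $F[\ast,\innodescl{t}]$ are linearly independent (reading off the coordinate $\sigma(i)$ kills every term but $\vec{w}_{i,l_i}$). In the edge case $t=v_\rho\in\outnodes{s}$ for some $\rho\notin\bigcup_i J_i$, so $\vec{t}=\vec{u}_0+\vec{u}_\rho+\sum_{i\ge 2}\vec{w}_{i,l_i}$, and moreover $s\in\innodescl{t}$; then the $\nu+1$ columns $\vec{s},\vec{t},\vec{w}_{2,l_2},\dots,\vec{w}_{\nu,l_\nu}$ of $F[\ast,\innodescl{t}]$ span the same space as $\set{\vec{u}_0,\vec{u}_\rho}\cup\set{\vec{w}_{i,l_i}}_{i\ge 2}$, and I would check this latter set is linearly independent of size $\nu+1$ by looking at coordinates $\rho$ and $\sigma(2),\dots,\sigma(\nu)$, which are pairwise distinct and all nonzero.

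Finally I would run the Schwartz-Zippel step exactly as in \Cref{lem:conn-lower-bound}. Having exhibited, at the special evaluation above, an $r\times r$ full-rank submatrix $F'$ of $F[\ast,\innodescl{t}]$ with $r=\nu+1$ (edge case) or $r=\nu$ (non-edge case), note by \Cref{node-eq:flow-closed-form} that each entry of $F$ is a polynomial of degree at most $n-1$ in the $z_{uv}$ divided by $\det(I-Z)$, so $\det F'$ equals a polynomial $P$ of degree at most $r(n-1)\le n^2$ divided by $\det(I-Z)^r$; this $P$ is not identically zero because at the special evaluation $\det(I-Z)=1$ and $\det F'\neq 0$. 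By the Schwartz-Zippel Lemma (\Cref{schwartz-zippel}) a uniform random evaluation keeps $P$ nonzero with probability at least $1-1/n^3$ (using $p=\Theta(n^5)$), and by \Cref{node-lem:invertible} it keeps $I-K$ invertible with probability at least $1-1/n^3$; on the intersection of these two events $F'$ is an honest full-rank submatrix of $F[\ast,\innodescl{t}]$, giving the claimed lower bound with probability at least $1-2/n^3$.

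The one genuinely delicate part is the bookkeeping of flow vectors in the second paragraph: I must be sure that no \emph{extra} offset vectors $\vec{u}_j$ leak into the $\vec{w}_{i,l_i}$ or into $\vec{t}$ (they cannot, because each internal vertex lies on a unique path, so a vertex's in-neighbors that are ``active'' under the special evaluation are forced), and that in the edge case it is precisely the offset $\vec{u}_\rho$ injected at $t\in\outnodes{s}$ that supplies the ``$+1$'' in $\nu(s,t)+1$. This last point is exactly the subtlety flagged in \Cref{correctingtheliterature}, and it is the reason the statement involves $\innodescl{t}$ rather than $\innodes{t}$.
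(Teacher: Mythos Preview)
Your proposal is correct and follows the same overall strategy as the paper's proof: fix a maximum family of internally vertex-disjoint $s$--$t$ paths, set $z_{uv}=1$ exactly on path edges, compute the flow vectors at that special evaluation to exhibit a full-rank $r\times r$ submatrix of $F[\ast,\innodescl{t}]$ (with $r=\nu$ or $r=\nu+1$), and then invoke Schwartz--Zippel together with \Cref{node-lem:invertible}.

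Your bookkeeping is in fact a bit more careful than the paper's. The paper applies \Cref{vertex-flow-general} to every internal vertex of $P_i$ after the first, concluding that the penultimate vertex $b_i$ has flow vector $\vec u_0+\vec u_i$. But \Cref{vertex-flow-general} only applies to vertices \emph{outside} $\outnodescl{s}$; if some later internal vertex $w_{i,j}$ (with $j\ge 2$) happens to lie in $\outnodes{s}$, then \Cref{vertex-flow-sout} applies instead and an additional unit-vector offset is injected. Your formulation with the sets $J_i$ (all indices of out-neighbors of $s$ lying on $P_i$) absorbs these extra offsets, and your linear-independence check via the distinguished coordinates $\sigma(i)\in J_i$ (and $\rho\notin\bigcup_i J_i$ in the edge case) goes through regardless. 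You also explicitly note that $Z$ is nilpotent at the special evaluation, so $I-Z$ is invertible there; the paper uses this implicitly when asserting $\det F'\neq 0$ at that evaluation but does not spell it out. None of this changes the shape of the argument, but it closes small gaps in the paper's writeup.
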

\begin{proof}
    Abbreviate $\nu = \nu(s,t)$.
    Intuitively, our proof will argue that the presence of internally vertex-disjoint paths from $s$ to $t$ will lead to certain nodes in $\innodescl{t}$ being assigned linearly independent flow vectors (with high probability), which then implies the desired lower bound.
    
    By definition, $G$ contains $\nu$ internally-vertex disjoint paths $P_1, \dots, P_{\nu}$ from $s$ to $t$.
    
    Consider the following assignment to the $z_{uv}$ variables of the matrix $Z$.
    For each $(u,v)$, we set $z_{uv} = 1$ if $(u,v)$ is an edge on one of the $P_i$ paths.
    Otherwise, we set $z_{uv} = 0$.
    For each vertex $v$, let $\vec{v}$ denote the flow vector for $v$ with respect to this assignment.
    To show the desired rank lower bound, it will help to first identify the values of some of these flow vectors.

    Let $\vec{u}_0$ denote the unit vector initially pumped out at vertex $s$ (as in \Cref{vertex-flow-sout}).

    Take an arbitrary $P_i$ path.
    
    First, suppose that $P_i$ is a path of length at least two.
    In this case, let $a_i$ denote the second vertex in $P_i$, and $b_i$ denote the penultimate vertex in $P_i$ (note that if $P_i$ has length exactly two, then $a_i = b_i$).
    By definition, $a_i\in\outnodes{s}$.
    Let $\vec{u}_i$ be the unit vector initially pumped at node $a_i$ (as in \Cref{vertex-flow-sout}).
    Then from our choice of assignment to the $z_{uv}$ variables, by \Cref{vertex-flow-sout} we have    
        \[\vec{a}_i = \vec{u}_0 + \vec{u}_i.\]
    Then by applying \Cref{vertex-flow-general} repeatedly to the vertices on the path from $a_i$ to $b_i$ on $P_i$, we find that
        \begin{equation}
        \label{eq:bygonesbebygones}
        \vec{b_i} = \vec{u}_0 + \vec{u}_i
        \end{equation}
    as well.
    The above equation characterizes the flow vectors for the penultimate vertices of $P_i$ paths of length at least two.
    It remains to consider the case where $P_i$ has length one.
    If $P_i$ has length one, it consists of a single edge from $s$ to $t$.
    Then by \Cref{vertex-flow-sout} we have
        \begin{equation}
        \label{amirabboud}
        \vec{s} = \vec{u}_0.
        \end{equation}

    We are now ready to show that the flow vectors for nodes in $\innodescl{t}$ together achieve the desired rank lower bound, for this particular assignment of values to the $z_{uv}$ variables.

    \begin{claim}
    \label{gottaeat}
        With respect to the assignment where $z_{uv} = 1$ if $(u,v)$ is an edge in a $P_i$ path, and $z_{uv} = 0$ otherwise, the rank of the flow vectors in $\innodescl{t}$ is at least $\nu+1$ if $(s,t)$ is an edge, and at least $\nu$ otherwise.
    \end{claim}
    \begin{claimproof}
            We perform casework on whether $(s,t)$ is an edge or not.

    \textbf{Case 1: $(s,t)$ is not an edge}

    Suppose that $(s,t)$ is not an edge.
    Then every path $P_i$ has length at least two.
\Cref{eq:bygonesbebygones} shows that the flow vectors in $\innodes{t}$ include $\vec{u}_0 + \vec{u}_1,\vec{u}_0 + \vec{u}_2,\dots,\vec{u}_0 + \vec{u}_{\nu}$. Since the $\vec{u}_i$ are distinct unit vectors, these flow vectors have rank at least $\nu$ as desired.
    
    \textbf{Case 2: $(s,t)$ is an edge}

    Suppose instead that $(s,t)$ is an edge.
    Then one of the paths in our maximum collection of vertex-disjoint paths must be a direct edge from $s$ to $t$.
    Without loss of generality, let $P_\nu$ be this path of length one (so that $P_i$ is a path of length two for all $1\le i\le \nu-1$).

    In this case, $t\in\outnodes{s}$.
    Let $\vec{u}_{\nu}$ denote the unit vector pumped out at $\vec{t}$.

    By substituting \Cref{eq:bygonesbebygones,amirabboud}
 into \Cref{vertex-flow-sout}, we get that
        \[\vec{t} = \vec{s} + \grp{\sum_{i=1}^{\nu-1} \vec{b}_i}  + \vec{u}_\nu  = \nu\cdot \vec{u}_0 + \grp{\sum_{i=1}^{\nu-1} \vec{u}_i} + \vec{u}_{\nu}.\]
        
    By combining the above equation with  \Cref{eq:bygonesbebygones,amirabboud}, we see that flow vectors assigned to nodes in $\innodescl{t}$, which include $\vec{b}_1,\dots,\vec{b}_{\nu-1},\vec{s}$, and $\vec{t}$, span the space containing distinct unit vectors $\vec{u}_0,\vec{u}_1,\dots,\vec{u}_\nu$, and hence have rank at least $\nu + 1$ as desired.
    \end{claimproof}

    For convenience, in the remainder of this proof, we let $\tilde{\nu}$ denote $\tilde{\nu} = \nu+1$ if $(s,t)$ is an edge, and set $\tilde{\nu}=\nu$ otherwise.

    By \Cref{gottaeat}, $\rank F[\ast, \innodes{t}] = \tilde{\nu}$.
    So, $F[\ast, \innodes{t}]$ contains a full rank $\tilde{\nu}\times\tilde{\nu}$ submatrix.
    
    Let $F'$ be such a submatrix.

    Now, before assigning values to the $z_{uv}$ variables, each entry of $\adj(I-Z)$ is a polynomial of degree at most $n$.
    So by \Cref{node-eq:flow-closed-form}, $\det F'$ is equal to some polynomial $P$ of degree at most $n\tilde \nu$, divided by the polynomial $(\det(I-Z))^{\tilde \nu}$.
    Note that $\det(I-Z)$ is nonzero polynomial, since it has constant term equal to $1$.
    
    By the definition of $F'$, we know that under the assignment of values to the variables from the statement of \Cref{gottaeat}, we have $\det(F')\neq 0$.
    Since as a polynomial 
    \begin{equation}
    \label{mysteriesofscience}
    \det(F') = P/\grp{\det(I-Z)}^{\tilde \nu},
    \end{equation}
    it must be the case that $P$ is a nonzero polynomial (because if $P$ was the zero polynomial, then $\det(F')$ would evaluate to zero under every possible assignment).

    By \Cref{node-lem:invertible}, with probability at least $1-1/n^3$, the determinant $\det(I-Z)\neq 0$ will be nonzero under a uniform random assignment to the $z_{uv}$ variables.
    Assuming this event occurs, by the Schwartz-Zippel Lemma (\Cref{schwartz-zippel}), a uniform random evaluation to the variables will additionally have $P \neq 0$ with probability at least 
        \[1 - (2\tilde \nu n)/p \ge 1 - 1/n^3\]
    by setting $p\ge 2n^{5}$.
    
    So by union bound, a uniform random assignment of values from $\mathbb{F}_p$ to the $z_{uv}$ variables will make $P$ and $\det(I-Z)$ nonzero simultaneously with probability at least $1-2/n^3$.
    
    If this happens, by \Cref{mysteriesofscience}, we have $\det(F')\neq 0$.
    This means a particular $\tilde{\nu}\times \tilde{\nu}$ submatrix of $F[\ast,\innodes{t}]$ will be full rank with the claimed probability, which proves the desired result.
\end{proof}

With these lemmas established, we can immediately prove the main result of this section.

\begin{proof}[Proof of \Cref{node-prop:original-vc}]
    Fix a pair of distinct vertices $(s,t)$.

    Substituting the definition of $H_s$ into \Cref{node-eq:flow-closed-form} implies that 
        \[F[\ast,\innodescl{t}] = (I-K)^{-1}[\outnodescl{s},\innodescl{t}].\]
    By union bound over \Cref{node-lem:an-upper-bound,node-lem:a-lower-bound}, we see that
        \[\rank~(I-K)^{-1}[\outnodescl{s},\innodescl{t}] = \rank F[\ast,\innodescl{t}] = \begin{cases}\nu(s,t)+1 & \text{if }(s,t)\text{ is an edge} \\ \nu(s,t) & \text{otherwise}\end{cases}\]
     with probability at least $1-3/n^3$.
\end{proof}

\end{document}